\newtheorem{theorem}{Theorem}
\title{Uncertainty Quantification in Working Memory via Moment Neural Networks}
\author{Hengyuan Ma}
\date{September 2024}
\def\mm{\textit{Methods}}
\def\si{\textit{Supplementary Information}}
\begin{document}

\begin{flushleft}

{\Large
Uncertainty Quantification in Working Memory via Moment Neural Networks
}
\newline
{
  Hengyuan Ma\textsuperscript{1},
  Wenlian Lu\textsuperscript{1,2,3,4,5,6},
  Jianfeng Feng\textsuperscript{1,2,3,4,7$\ast$} 
\\

\bigskip
\it{1} Institute of Science and Technology for Brain-inspired Intelligence, Fudan University, Shanghai 200433, China
\\
\it{2} Key Laboratory of Computational Neuroscience and Brain-Inspired Intelligence (Fudan University), Ministry of Education, China
\\
\it{3} School of Mathematical Sciences, Fudan University, No. 220 Handan Road, Shanghai, 200433, Shanghai, China
\\
\it{4} Shanghai Center for Mathematical Sciences, No. 220 Handan Road, Shanghai, 200433, Shanghai, China
\\
\it{5} Shanghai Key Laboratory for Contemporary Applied Mathematics, No. 220 Handan Road, Shanghai, 200433, Shanghai, China
\\
\it{6} Key Laboratory of Mathematics for Nonlinear Science, No. 220 Handan Road, Shanghai, 200433, Shanghai, China 
\\
\it{7} Department of Computer Science, University of Warwick, Coventry, CV4 7AL, UK
\newline
$\ast$ jffeng@fudan.edu.cn 
}
\end{flushleft}

\begin{abstract}
Humans possess a finely tuned sense of uncertainty that helps anticipate potential errors, vital for adaptive behavior and survival. However, the underlying neural mechanisms remain unclear. This study applies moment neural networks (MNNs) to explore the neural mechanism of uncertainty quantification in working memory (WM). The MNN captures nonlinear coupling of the first two moments in spiking neural networks (SNNs), identifying firing covariance as a key indicator of uncertainty in encoded information. Trained on a WM task, the model demonstrates coding precision and uncertainty quantification comparable to human performance. Analysis reveals a link between the probabilistic and sampling-based coding for uncertainty representation. Transferring the MNN's weights to an SNN replicates these results. Furthermore, the study provides testable predictions demonstrating how noise and heterogeneity enhance WM performance, highlighting their beneficial role rather than being mere biological byproducts. These findings offer insights into how the brain effectively manages uncertainty with exceptional accuracy.
\end{abstract}

\section{Introduction}

Humans not only make decisions but also assess the degree of confidence associated with those decisions. This ability is crucial because it enables individuals to adapt their behavior based on the reliability of their judgments, enabling more effective navigation of uncertain environments. For instance, Accurately assessing confidence (or quantifying uncertainty) allows humans to prioritize cognitive resources, seek targeted information to resolve specific uncertainties, and effectively communicate their confidence levels to guide collaborative decision-making. Importantly, extensive studies have shown that humans possess the ability to generate a sense of uncertainty that accurately reflects the likelihood of errors across various tasks~\cite{keshvari2012probabilistic,devkar2017monkeys,honig2020humans,jabar2020using,yoo2021uncertainty}. In other words, humans tend to feel less confident when they make larger mistakes, and more confident when their errors are smaller. This capability gives humans an advantage over artificial intelligence in tasks requiring trustworthiness and reliability, as deep networks often become overconfident in their predictions even when large mistakes occur~\cite{gawlikowski2023survey}. However, the neural mechanisms underlying this precise uncertainty representation remain unclear. Elucidating the neural basis of uncertainty representation is a fundamental challenge in neuroscience~\cite{walker2023studying}. Advancing this understanding also has implications for designing more trustworthy and interpretable artificial intelligence systems~\cite{seuss2021bridging}. Notably, research in this domain has extensively utilized second-order moments for uncertainty quantification (UQ)~\cite{gast2018lightweight,postels2019sampling}. This raises an intriguing question: do humans similarly rely on second-order moments of neural activities for UQ?

There has been a long-standing debate on how neuron population activity represents uncertainty~\cite{ma2014neural}. Two leading theories, probabilistic population coding~\cite{ma2006bayesian} and sampling-based coding~\cite{orban2016neural}, offer different interpretations. Probabilistic population coding suggests that both the mean of the target variable and its uncertainty are captured by time-averaged neural responses. In contrast, sampling-based coding attributes the mean representation to the average response and uncertainty to response variability. While both frameworks successfully link neural activity to behavioral variability observed in experiments, neither fully captures the precise representation of uncertainty required to account for observed errors~\cite{honig2020humans,jabar2020using,yoo2021uncertainty,li2021joint}.

Working memory (WM) is the basis of numerous high-level cognitive processes~\cite{bays2024representation}. Understanding the neural mechanisms underlying uncertainty representation and quantification in WM is essential for explaining how humans manage uncertainty and generate confidence in various high-level tasks. Although WM is prone to errors due to both internal and external noise~\cite{oberauer2018benchmarks}, humans are aware of the potential errors through a sense of uncertainty~\cite{li2021joint} and leverage this awareness to optimize their performance~\cite{honig2020humans}. Additionally, significant fluctuations in WM error levels have been observed~\cite{van2012variability,keshvari2012probabilistic,rademaker2012introspective}, highlighting the importance of reliable UQ for recognizing deficiencies in WM content. 
Although probabilistic models of UQ have been proposed~\cite{henaff2020representation,li2021joint}, its neural mechanisms and a biologically plausible implementations, such as spiking neural networks (SNN), remain largely underexplored.

Ring attractor neural networks are prominent models for WM, supported by experimental evidence~\cite{wimmer2014bump,kim2017ring,chaudhuri2019intrinsic,khona2022attractor}. These models apply a ring manifold to store a continuous feature, such as head direction~\cite{kutschireiter2023bayesian}. However, they rely on highly structured synaptic connections~\cite{burak2009accurate,wu2016continuous}, which contradict the observed heterogeneity in neural tuning functions and synaptic connections~\cite{fisher2013modeling,chaudhuri2019intrinsic}. Additionally, WM errors are attributed to diffusion of the bump location, which fails to explain how humans quantify uncertainty in WM, as the variance of the decoded feature remains constant despite variations in actual memory error~\cite{keshvari2012probabilistic,rademaker2012introspective}.

Instead of manual design, there is a growing trend toward training recurrent neural networks for cognitive tasks~\cite{mante2013context,song2016training}, which facilitates the discovery of diverse network configurations. However, many of these studies~\cite{orhan2019diverse} rely on backpropagation through time for training~\cite{song2016training}, which is not biologically plausible. Recently, Darshan et al. (2022) proposed to train networks with synaptic heterogeneity using reservoir computing for WM~\cite{darshan2022learning}. While this approach is backpropagation-free and breaks the structured network connection, it introduces systematic drift, leading to significant coding errors without a mechanism to capture the associated uncertainty. Additionally, like many studies~\cite{orhan2019diverse,echeveste2020cortical}, they use rate-based models, which fail to account for biologically realistic features, such as spike-based communication in neurons. Training more realistic models, like SNNs, remains much more challenging.

In this study, we address the above issues by employing moment neural networks (MNNs)~\cite{lu2010gaussian}. Unlike rate-based neural models, which simplify SNNs by considering only the mean firing rate (mean), MNNs capture the complex nonlinear dynamics arising from both the mean firing rate and firing covariance (covariance) of leaky integrate-and-fire SNNs. Moreover, MNNs use differentiable moment activations (Eq.~\eqref{eq:moment_activation}, \mm) instead of discrete spikes, allowing for easier training. MNNs have proven to be a valuable model for studying the neural basis of various cognitive functions. They have revealed how covariance influences coding precision and memory capacity in working memory (WM)~\cite{ma2023self}, demonstrated how covariance encodes or extracts information during perceptual tasks~\cite{zhu2024learning}, and shown how spiking neurons facilitate faster decision-making processes~\cite{zhu2024towards}.

Our results show that, trained using the reservoir computing approach introduced in~\cite{darshan2022learning}, MNNs effectively capture uncertainty in WM through covariance, achieving coding precision comparable to human performance in WM tasks, even with significant synaptic heterogeneity. We then propose a potential mechanism for uncertainty representation in WM, revealing a direct link between probabilistic population coding and sampling-based coding through the nonlinear coupling of mean and covariance. These findings, derived from the MNN, are validated on the SNN by transferring the trained MNN weights. This not only confirms that the MNN serves as a faithful substitute for the SNN but also achieves the first implementation of uncertainty quantification in WM via spike-based communications. Additionally, we present testable predictions on how factors such as noise, neural correlations, and heterogeneity influence WM performance, highlighting their beneficial roles during network training and inference. Overall, this study offers new insights into how humans manage uncertainty with remarkable precision through correlated neuronal fluctuations.
\begin{figure}
    \centering
    \includegraphics[width=0.99\linewidth]{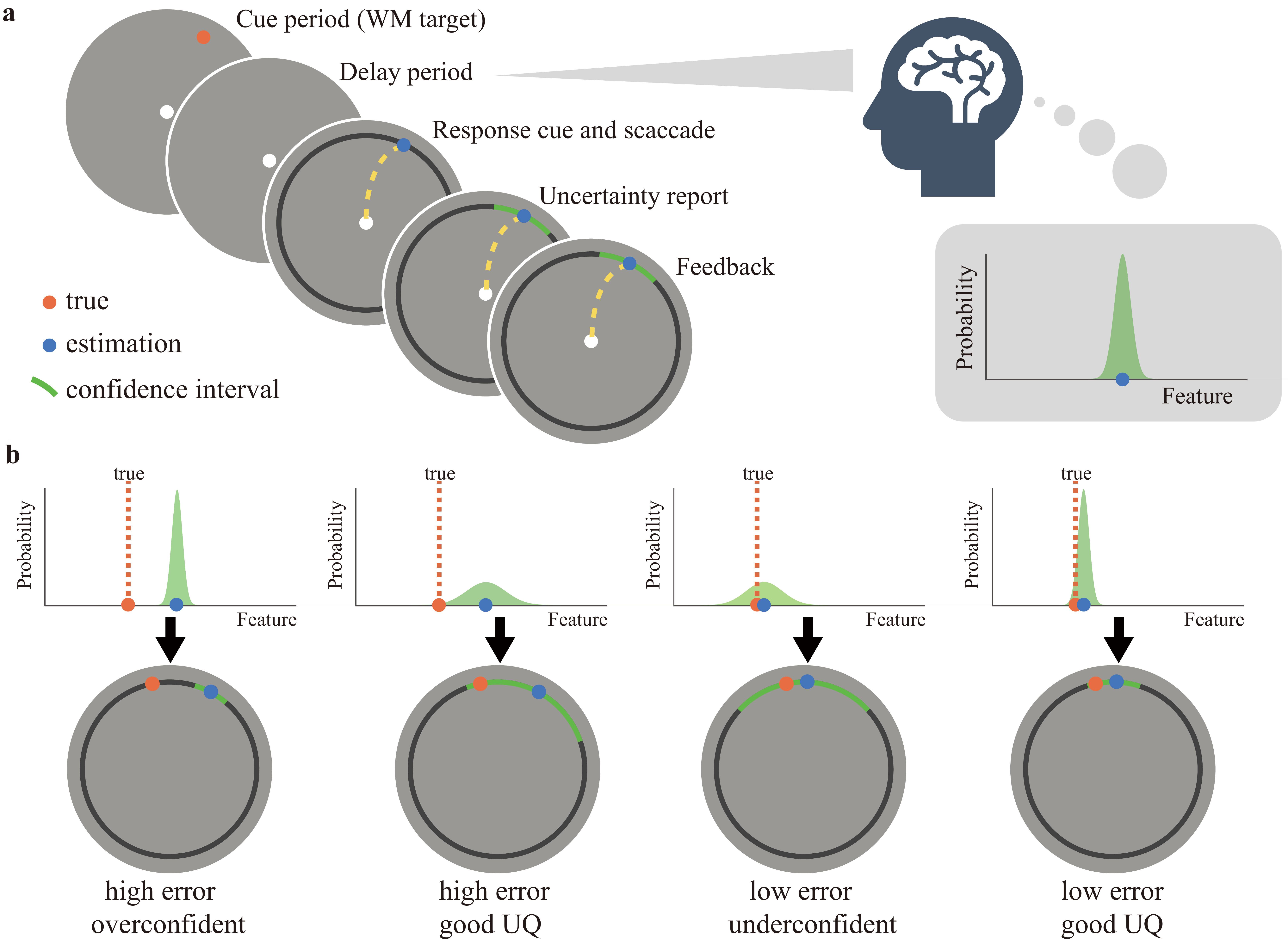}
    \caption{{\bf Working memory task and its uncertainty quantification (UQ). }(a) In the task designed in~\cite{li2021joint}, participants are required to remember the location indicated by the cue. After a delay period, they use saccades to indicate the remembered location on the ring and report their uncertainty with an arc.
    (b) Four representative cases of UQ results. Effective UQ should accurately reflect the magnitude of the error.}\label{fig:memory}
\end{figure}

\section{Working memory and uncertainty quantification}

WM is one of the most widely studied cognitive functions, serving as the foundation for various cognitive processes, including learning and comprehension, attention and focus, goal-oriented behavior, and adaptability~\cite{oberauer2002access}.
Previous studies have shown that WM is susceptible to errors, leading to variations in precision that are reflected in behavioral mistakes~\cite{oberauer2018benchmarks}. However, humans possess an awareness of these potential errors, which manifests as a sense of confidence about precision of remembered content~\cite{li2021joint}. Further research indicates that individuals can leverage this awareness to optimize their performance~\cite{honig2020humans}.

As shown in Fig.~\ref{fig:memory}a, the WM task designed by Li et al.~\cite{li2021joint} involves participants first being presented with a cue that provides information about a feature (location) to be memorized. This is followed by a delay period during which the cue is removed. After the delay, participants are asked to report the remembered feature and indicate their uncertainty by adjusting the length of an arc centered at the reported location, representing a confidence interval for estimating the true position.
To encourage participants to quantify uncertainty about the target location accurately, the experiment awarded points based on their responses. Points were granted only if the target location fell within the designated arc, with the score decreasing as the arc length increased. To maximize their scores, participants were expected to use shorter arcs when they were more confident.
Broadly, responses can be categorized into four types, as illustrated in Fig.~\ref{fig:memory}b. Ideal uncertainty quantification (UQ) occurs when the arc length is large for large errors and small for small errors. The other two cases represent overconfidence and underconfidence. The study reported a strong correlation (correlation coefficient of 0.6–0.7) between arc length and error (Fig.~5D in~\cite{li2021joint}), indicating that participants’ reported uncertainty reliably reflected the error in their responses.

\section{Neural models for working memory}

\begin{figure}
    \centering
    \includegraphics[width=0.95\linewidth]{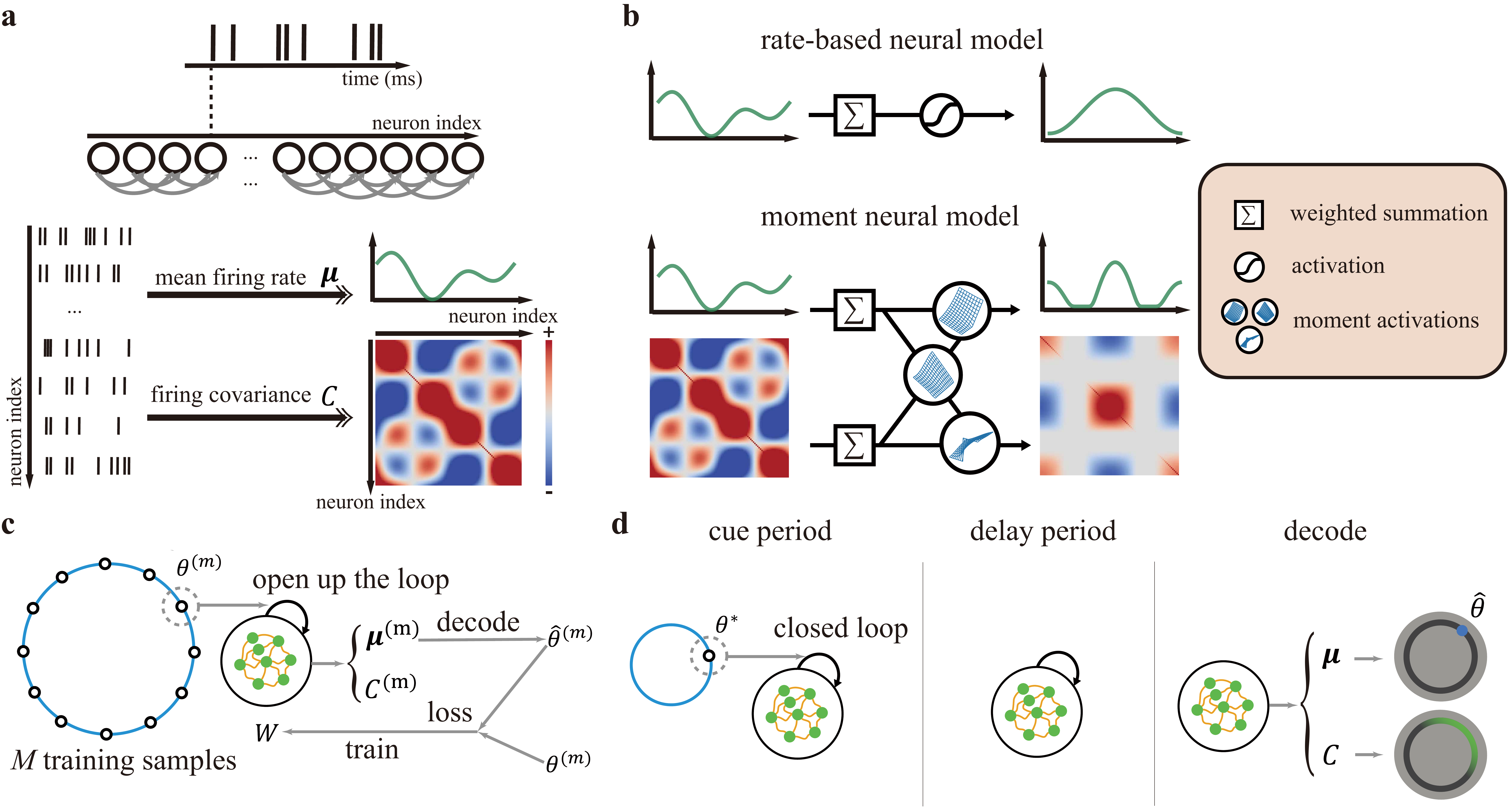}
    \caption{{\bf Comparison of the spiking  neural network (SNN), rate-based neural network, and moment neural network (MNN).} (a). (Top) An scheme of an SNN. 
    (Bottom) Spike trains of a neuron population can be summarized by the mean firing rate of neurons and the firing covariance between each pair of neurons.
    (b). (Top) The rate-based neural model only considers the mean firing rate and its nonlinear evolution through an activation function.
     (Bottom) The MNN captures both the mean firing rate and firing covariance, with their nonlinear coupling during network evolution represented by the moment activations.
    (c) Training the synaptic connection of an MNN for the working memory task using reservoir computing.  
    (d) The inference procedure is as follows: During the cue period, a feature encoded by external input is sent to the trained MNN. During the delay period, the external inputs are removed. After the delay period, the feature estimation and its uncertainty are decoded from the mean and covariance of the MNN, respectively. }
    \label{fig:spike_vs_rate}
\end{figure}

\subsection{Spiking neural network}

Spiking neural networks (SNNs) (Eq.~\eqref{eq:LIF}, \mm) are widely used to model the dynamics of biological neural systems underlying various cognitive tasks, as illustrated in the top of Fig.~\ref{fig:spike_vs_rate}a. However, the discrete nature of spike trains presents challenges for model construction and analysis. To address this, spike trains can be simplified by extracting statistical features such as mean and covariance, as shown in Fig.~\ref{fig:spike_vs_rate}b. When only the mean is considered, spiking neural networks can be represented as rate-based neural networks.

\subsection{Rate-based neural network}
Rate-based neural models, which serve as a simplification of SNNs, are widely used to model WM. Its dynamic is defined as
\begin{align}
    \tau \frac{\partial \bm{\mu}}{\partial t}  = -\bm{\mu}+\phi(\bar{\bm{\mu}})\label{eq:rate} ,
\end{align}
where $\bm{\mu}\in\mathbb{R}^N$ represent the mean of the spike count per unit time (mean firing rate), $\phi(\cdot)$ is the element-wise nonlinearity, which is often set as Sigmoid or tanh function, $\tau$ is the membrane time constant, and $\bar{\bm{\mu}}\in\mathbb{R}^N$ corresponds to the mean of the input current for each neuron calculated as $\bar{\bm{\mu}} = W\bm{\mu}+\bm{\mu}_s$,
where $W\in\mathbb{R}^{N\times N}$ is the synaptic connection matrix, and $\bm{\mu}_s$ is the external input current.
We illustrate the rate-based neural network in the top of Fig~\ref{fig:spike_vs_rate}b.
To model WM, the weight connections in the rate-based neural model are set to be shift-symmetric, with long-distance inhibition and short-distance excitation. This configuration ensures that the system's fixed points are identical bumps that shift to form a ring manifold~\cite{wimmer2014bump,qi2015subdiffusive}. The location of the bump encodes the feature being remembered. To incorporate internal neuronal noise, white noise is exerted to the network, causing diffusion of the bump location, with variance increasing linearly over time\cite{burak2012fundamental}. However, it cannot account for the variability in precision observed in WM experiments conducted under identical conditions~\cite{fougnie2012variability,van2012variability}, hence inadequate for explaining the reliable UQ performance observed in humans.

We highlight a key distinction between spike-based biological neural systems and rate-based neural networks. In the former, each computational unit is governed by point-process (spikes), while in the latter, units are represented by scalar values. These scalars approximate the spike point process, representing its firing rate. However, this approximation overlooks the variability inherent in point processes and the nonlinear dynamics of such variability. Previous studies have demonstrated the significant role this variability plays in neural computation~\cite{churchland2011variance,polk2012correlated}.

\subsection{Moment neural networks}

A major challenge faced by rate-based neural models is difficulties in capturing realistic neuronal fluctuations (firing covariance) that are nonlinearly coupled with the the mean , which plays an important role in the neural dynamics~\cite{lu2010gaussian,helias2014correlation,dahmen2016correlated}.
To capture the nonlinear coupling of correlated fluctuations in the recurrent population of spiking neurons, we employ a model known as the moment neural network (MNN)~\cite{lu2010gaussian} defined as
\begin{align}\label{eq:mnn}
\begin{aligned}
\left\{\begin{matrix}
        &\tau \frac{\partial \bm{\mu}}{\partial t}  = -\bm{\mu}+\phi_{\mu}(\bar{\bm{\mu}},\bar{C}) \\ 
    &\tau \frac{\partial C}{\partial t}= -C+
 \phi_{C}(\bar{\bm{\mu}},\bar{C})
\end{matrix}\right.
\end{aligned}
\end{align}
where firing covariability $C\in\mathbb{R}^{N\times N}$ represent the covariance of the spike count per unit time, respectively. The covariance of the total synaptic current input $\bar{C}\in\mathbb{R}^{N\times N}$ is calculated as $\bar{C} = WCW^\top+\sigma^2_sI$,
where $\sigma_s$ is the noise level during the training phase.
The moment activations $\phi_{\mu}$ and $\phi_{C}$ (defined 
in Eq.~\eqref{eq:moment_activation}, \mm) describes the relationship between the input current statistics and the output spike train statistics in the leaky integrate-and fire (LIF) spiking neural model. 
Unlike rate-based neural models which typically use heuristic activation functions such as tanh or sigmoid~\cite{amari1977dynamics,coombes2005waves,coombes2010neural,wu2016continuous}, the nonlinearity of MNN are derived through a mathematical technique known as the diffusion approximation~\cite{amit1991quantitative,amit1997dynamics} which faithfully captures the nonlinear coupling of mean and firing variability across populations of spiking neurons.

We highlight three key aspects of the MNN. First, unlike rate-based neural models (top of Fig.~\ref{fig:spike_vs_rate}b), the MNN captures the nonlinear coupling between mean and covariance, as illustrated in the bottom of Fig.~\ref{fig:spike_vs_rate}b. Second, the pattern of the covariance of the input synaptic current to the neural population $\bar{C}$ that emerges from the model is a result of the intrinsic dynamics of the recurrent circuit, rather than being determined by external input, as in rate-based neural networks, where the input is uncorrelated.
Third, the nonlinear coupling between mean and covariance during MNN training allows the network to regulate both simultaneously in task performance. For instance, the error signal during training enables the model to adjust covariance to represent task-related uncertainty, a capability that networks based solely on mean cannot achieve. 
\section{Training the MNN for working memory}

We trained an MNN to hold WM of a continuous periodic feature $\theta \in [0, 2\pi)$. During training, we selected $M$ training features $\theta^{(m)}, m = 1, \ldots, M$, each encoded as an external current $\bm{\mu}_s^{(m)}$. We trained the connections $W$ so that the mean of the fixed point encodes the value of $\theta^{(m)}$ under input $\bm{\mu}_s^{(m)}$.
We applied a reservoir computing approach following~\cite{darshan2022learning}, with $L_2$ regularization controlled by a factor $\alpha$ (Eq.~\eqref{eq:opt} in \mm), which encourages weaker or sparser connections, making the network more energy-efficient. See \mm~for details. The training procedure is illustrated in Fig.~\ref{fig:spike_vs_rate}c.

Two key points should be emphasized. First, similar to the approach in~\cite{darshan2022learning}, the connection matrix \( W \) includes an untrainable random matrix component \( J \in \mathbb{R}^{N \times N} \), which introduces unstructured heterogeneity, consistent with experimental findings that neuronal connection dynamics are highly heterogeneous~\cite{renart2003robust,murray2017stable,langdon2023unifying}. The degree of this heterogeneity is controlled by a factor \( g > 0 \). While some studies have shown that structured heterogeneity can reduce error levels in WM~\cite{kilpatrick2013optimizing}, unstructured heterogeneity has been shown to often lead to a rapid decline in the network's computational capabilities~\cite{burak2009accurate}. This highlights the challenge of maintaining reasonable cognitive precision of our model.  
Second, the training loss (Eq.~\eqref{eq:opt} in \mm) does not include the covariance \( C \). Consequently, only the mean is directly supervised during training, while the covariance remains unsupervised. This distinguishes our MNN from many models in machine learning for UQ, which typically require supervision of both the mean and the (co)variance of the model's output through loss functions such as log-likelihood or evidence lower bound~\cite{blundell2015weight,gast2018lightweight,wu2018deterministic}. In contrast, the covariance in our MNN adapts to its inputs indirectly, through the nonlinear coupling between the mean and the covariance.

After training, we tested the network for the WM task on $L$ instances, as illustrated in Fig.~\ref{fig:spike_vs_rate}d. We first input cue that convey the information of a target feature $\theta^{\ast}$ to the network during the cue period. Then, during the delay period, we removed the external current and introduced a new parameter $\sigma_{s,2}$ to control the noise level during the inference phase (see Eq.~\eqref{eq:inference_input}, \mm). After the delay period ends, we decoded the remembered feature \(\hat{\theta}\) and its associated uncertainty level $\kappa$ from the network state as \((\bm{\mu}, C)\). We quantified the uncertainty level represented by the covariance based on four uncertainty metrics I-IV (Eq.~\eqref{eq:uncern_1} and Eq.~\eqref{eq:uncern_2}, \mm), and we also record the estimation error $e$. An effective UQ is expected to show a strong positive correlation coefficient between the uncertainty $\kappa$ and the error $e$, which is estimated by Eq.~\eqref{eq:rho}, \mm.
\begin{figure}
    \centering
    \includegraphics[width=0.99\linewidth]{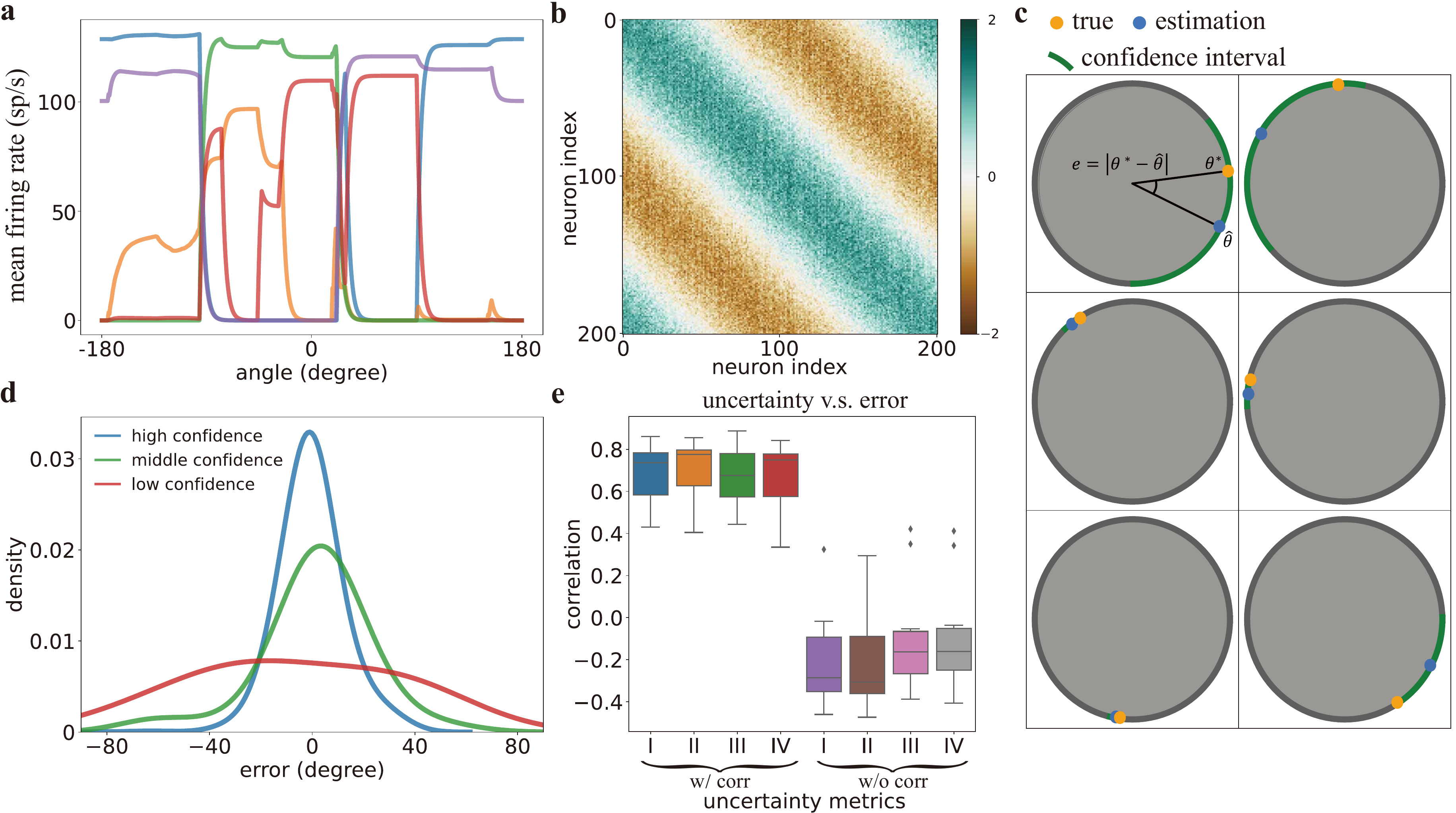}
    \caption{\textbf{The performance of moment neural network (MNN) trained on the working memory task.}  
    (a) The tuning curves of five neurons.  
    (b) The trained weights of the network.  
    (c) Six instances of the true input variables and the decoded feature and confidence interval decoded from the network. The arc length of the confidence interval is proportional to the square root of the first eigenvalue of the decoded covariance matrix $\hat{C}_{z}$, see {\bf Network inference} section in \mm.
    (d) The error distribution across three groups of instances, divided based on the level of uncertainty (using the uncertainty metrics I defined Eq.~\eqref{eq:uncern_1}, \mm): top 25\% uncertainty (low confidence), top 25-50\% uncertainty (middle confidence), and the remaining instances (high confidence). Corresponding results of uncertainty metrics II-IV are shown in the \si.
    (e) The correlation between uncertainty (calculated using four indicators, I-IV) and the error, calculated under two conditions: one where the correlation between neuron activities is maintained (w/ corr) and one where the correlation is clamped to zero (w/o corr).
 }
    \label{fig:uq_result}
\end{figure}

\section{Precise coding and reliable uncertainty quantification in the MNN}

We analyzed the trained MNN as follows. Unlike previous studies where the bump attractor network's connection matrix is homogeneously designed~\cite{burak2012fundamental,wimmer2014bump,ma2023self}, our MNN employs a connection matrix trained through reservoir computing, a biologically plausible algorithm~\cite{maass2002real}, rather than being manually specified. This approach makes our model more biologically realistic: both the tuning curves (Fig.~\ref{fig:uq_result}a) and the connections (Fig.~\ref{fig:uq_result}b) are heterogeneous, aligning with experimental findings~\cite{langdon2023unifying}. Interestingly, each neuron only responds to a narrow range of features, suggesting a sparse coding scheme often observed experimentally~\cite{beyeler2019neural}.

We then present the decoding results of the MNN at the inference phase. 
Six instances of the decoded results are shown in Fig.~\ref{fig:uq_result}c, illustrating that error levels vary significantly across different instances. This variability is consistent with experimental observations of substantial fluctuations in the quality of WM representations within an individual~\cite{fougnie2012variability,van2012variability}. More importantly, we observe that larger errors are associated with higher covariance levels, suggesting that covariance effectively captures uncertainty.
Furthermore, we divided all $L$ test instances into three groups based on the uncertainty level: the low-confidence group corresponds to the top 25\% of uncertainty indicator \(I\), the middle-confidence group corresponds to the 25\%-50\% range of uncertainty indicator \(I\), and the remaining instances correspond to the high-confidence group. As shown in Fig.\ref{fig:uq_result}d, the low-confidence group exhibits the heaviest-tailed error distribution, followed by the middle-confidence group. This observation aligns with the experimental results in~\cite{rademaker2012introspective,fougnie2012variability}. 
Additionally, the precision of our model also reaches comparable level reported in~\cite{rademaker2012introspective,fougnie2012variability,li2021joint}.

We then systematically evaluated the UQ capability of the MNN. As shown in Fig.~\ref{fig:uq_result}e, the uncertainty levels $\kappa$ calculated from the I-IV indicators are all strongly positively correlated with the error $e$, with correlation coefficients comparable to those reported in~\cite{li2021joint}. This suggests that the covariance in the MNN effectively captures uncertainty.

To examine the role of neuron firing correlation in UQ, we clamped all pairwise correlations between neurons in the covariance matrix to zero in the MNN and repeated the experiments as an ablation study. As shown in Fig.~\ref{fig:uq_result}e, we observed a significant decrease in the correlation coefficient $\rho$ between the error $e$ and each of the four indicators. This finding suggests that the correlation between neural activities is crucial for the neural system's ability to quantify uncertainty. It aligns with studies showing through animal experiments that correlated fluctuations between neurons are essential for population coding~\cite{berkes2011spontaneous}.

\section{Mechanism of the uncertainty quantification}

\begin{figure}
    \centering
    \includegraphics[width=0.99\linewidth]{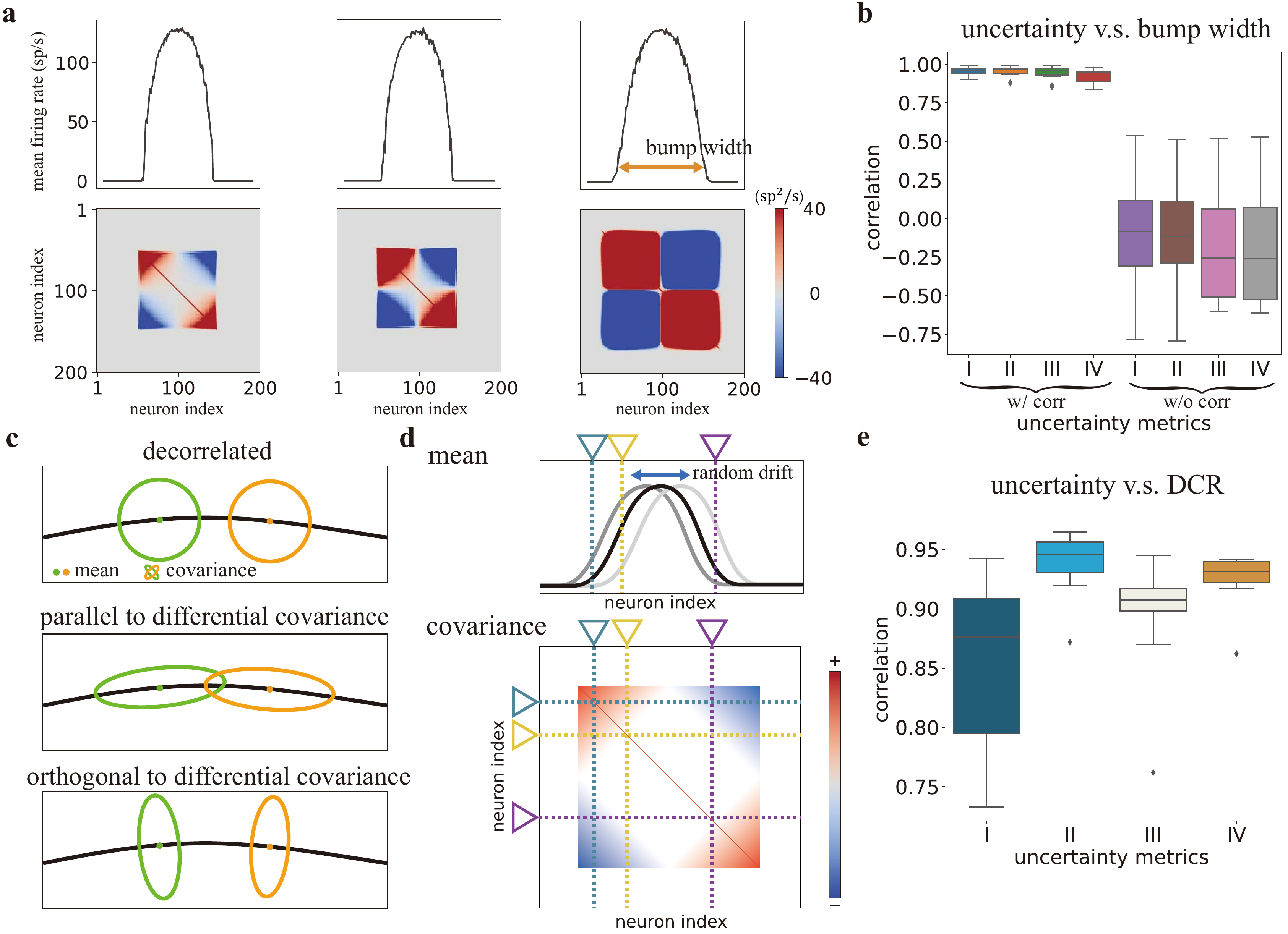}
    \caption{\textbf{Mechanism of uncertainty quantification in the moment neural network (MNN) for working memory.}  
(a) Several typical fixed-point patterns of mean firing rate (top) and firing covariance (bottom) produced by the MNN, with the bump width increasing from left to right.
(b) The correlation between the four uncertainty indicators (I-IV, Eq.~\eqref{eq:uncern_1}-Eq.~\eqref{eq:uncern_2}, \mm) and the bump width, calculated under two conditions: one where the correlation between neuron activities is maintained (w/ corr), and one where the correlation is clamped to zero (w/o corr).  
(c) A schematic showing how differential covariance affects the neural coding of variables with different values.  
(d) The random drift of the bump generates differential covariance components in the firing covariance.  
(e) The correlation between uncertainty and the differential covariance ratio (DCR).
    }
    \label{fig:mechanism}
\end{figure}

We analyzed the patterns of the fixed points, as shown in Fig.~\ref{fig:mechanism}a, where the mean and covariance of three fixed points are presented. Each pattern has been centered for comparison, and the five patterns are arranged in increasing order of covariance level.
The mean shows an imperfect bump, while the covariance displays a square-like pattern. More importantly, the bump width (defined as the number of neurons with firing rates exceeding 5\% of the peak) increases with the covariance level. To systematically verify this, we calculated the correlation coefficient between bump width and the four uncertainty indicators. As shown in Fig.~\ref{fig:mechanism}b, bump width is positively correlated with all four uncertainty indicators. In contrast, when firing correlations are clamped to zero, the correlation drops dramatically. This suggests that the uncertainty represented by the covariance is related to the bump width, which has been considered a form of probabilistic population coding for uncertainty~\cite{kutschireiter2023bayesian}.
In the probabilistic population coding framework, the mean encodes both the target variable and its associated uncertainty~\cite{ma2006bayesian, beck2011marginalization}. For instance, the bump amplitude or width~\cite{kutschireiter2023bayesian} represent uncertainty, where a smaller bump amplitude or larger bump width reflects greater uncertainty. In contrast, the sampling-based coding theory suggests that variability in neural activity encodes uncertainty, with higher variability corresponding to higher uncertainty. Our findings suggest that two theories capture distinct aspects of neural activity, and are linked through the nonlinear interaction between mean and covariance.

Next, we investigated how the nonlinear coupling between mean and covariance help to UQ. We propose that covariance reliably captures uncertainty through the following mechanism: when coding accuracy for an instance is low, the differential covariance component of the covariance increases, which has the effect of amplifying the level of uncertainty.
Differential covariance is the covariance component that changes the neural responses across trials along the direction
parallel to the tuning curve derivative hence shrinks its information it code~\cite{kohn2016correlations,panzeri2022structures,ma2023self}. As illustrated as Fig.~\ref{fig:mechanism}c, the differential covariance makes two codes of feature less distinguishable, reducing the coding accuracy, which has also been empirically verified in~\cite{ma2023self}.
If the coding accuracy of an instance is low, the bump location tends to drift from its original position, induces correlated neuronal fluctuation, as shown in the top of Fig.~\ref{fig:mechanism}d. We hypothesize that the covariance state accounts for the effects of these fluctuations, leading to the emergence of a differential covariance structure, where neurons on the same side of the bump are positively correlated, and neurons on opposite sides show negative correlation, as illustrated at the bottom of Fig.~\ref{fig:mechanism}d. Consequently, lower coding accuracy is associated with higher differential covariance components within the covariance. To verify this hypothesis, we introduce an indicator called the differential covariance ratio (DCR), which quantifies the amount of differential covariance (\mm). We calculated the correlation between DCR and the four uncertainty indicators. As shown in Fig.~\ref{fig:mechanism}e, DCR is highly positively correlated with all four uncertainty indicators, suggesting that differential covariance is the primary source of uncertainty, hence supporting our hypothesis.

Building on the intuitive interpretation above, we provide a rigorous proof using a simplified model to demonstrate how a neural system trained with a loss function that only supervises the mean can also learn the true output variance, emphasizing how the mean-covariance coupling contributes to effective uncertainty quantification. See Thm.~\ref{thm:consistency} and Thm.~\ref{thm:error_bound} in the \si.

\begin{figure}
    \centering
    \includegraphics[width=0.99\linewidth]{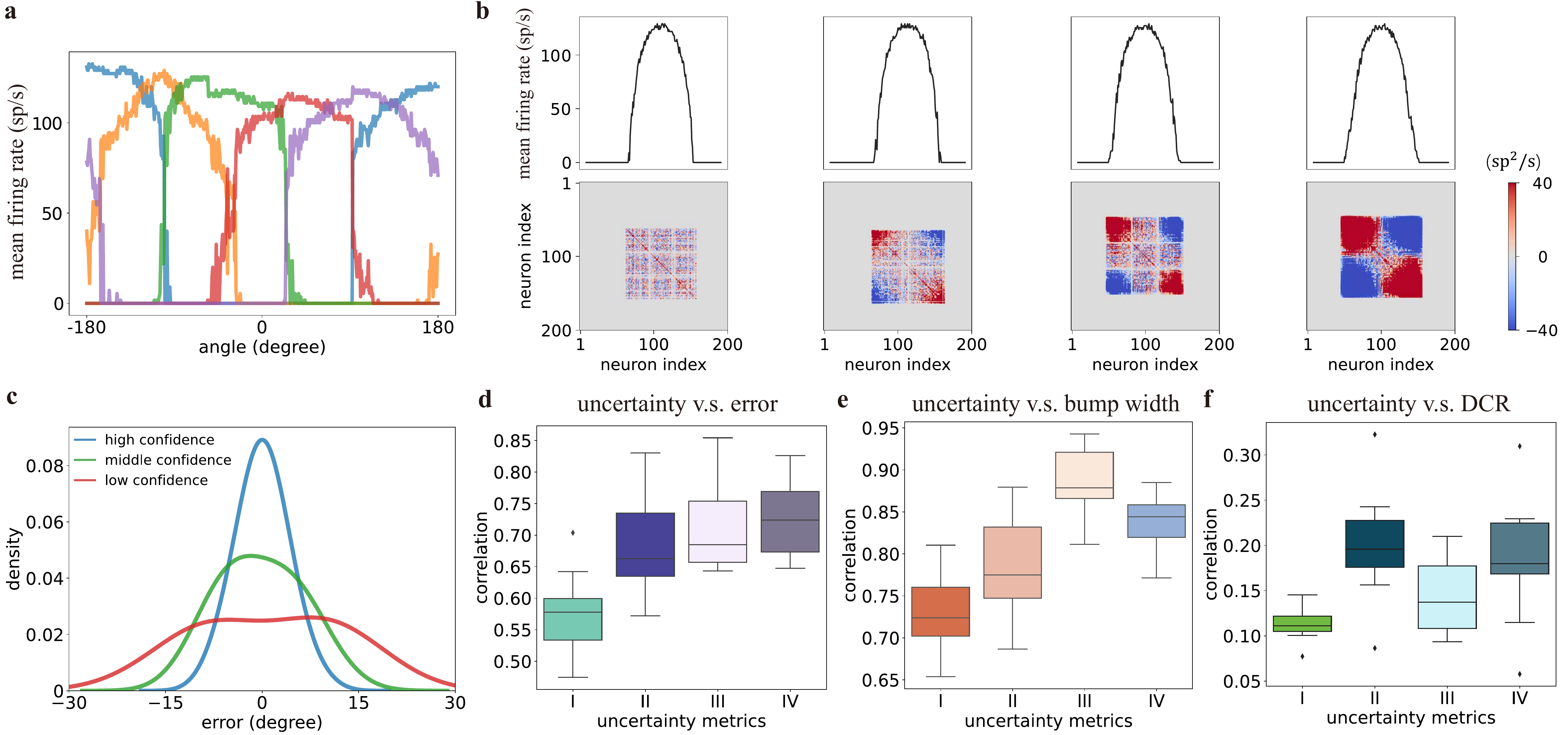}
    \caption{\textbf{The performance and spiking neural network (SNN) using the weights of the trained the moment neural network (MNN).}  
    (a) The tuning curves of five neurons.  
    The trained weights of the network.  
    (b) Several typical fixed-point patterns of mean firing rate (top) and firing covariance (bottom) produced by the SNN, with the bump width increasing from left to right.
    (c) The error distribution across three groups of instances, divided based on the level of uncertainty (using the uncertainty metrics I defined Eq.~\eqref{eq:uncern_1}, \mm): top 25\% uncertainty (low confidence), top 25-50\% uncertainty (middle confidence), and the remaining instances (high confidence). Corresponding results of uncertainty metrics II-IV are shown in the \si. 
    (d) The correlation between uncertainty (calculated using four indicators, I-IV) and the error.
    (e) The correlation between the four uncertainty indicators (I-IV, Eq.~\eqref{eq:uncern_1}-Eq.~\eqref{eq:uncern_2}, \mm) and the bump width.
    (f) The correlation between uncertainty and the differential covariance ratio (DCR). }
    \label{fig:snn}
\end{figure}

\section{Verification on spiking neural network}
We validated our findings on the SNN by transferring weights from the trained MNN to an SNN applying leaky integrate-and-fire neural model (Eq.~\eqref{eq:LIF}, \mm). As shown in Fig.~\ref{fig:snn}a, the tuning curves of the SNN were highly heterogeneous, consistent with those of the MNN. Again, each neuron responds to a narrow range of features, indicating a sparse coding scheme.
The steady-state mean firing rates and firing covariances of the SNN (Fig.~\ref{fig:snn}b) also exhibited qualitatively similar patterns to those observed in the MNN (Fig.~\ref{fig:mechanism}a). An analysis of the SNN's decoding results revealed that the error distribution (Fig.~\ref{fig:snn}c) and UQ performance (Fig.~\ref{fig:snn}d) were comparable to those of the MNN. These results confirm that the MNN effectively captures the mean and covariance dynamics of the SNN. Therefore, the MNN is a reliable and efficient alternative to directly training SNN for cognitive tasks. 

Furthermore, we observed a strong correlation between bump width and uncertainty in the SNN (Fig.~\ref{fig:snn}e), indicating the presence of mean-covariance coupling. Additionally, we calculated the correlation between uncertainty and the DCR in Fig.~\ref{fig:snn}f, which remained moderately positive, though lower than in the MNN (Fig.~\ref{fig:mechanism}e). This reduction may be attributed to fluctuations in the estimated covariance of the SNN, which likely diminish certain components of the differential covariance. These findings suggest that the proposed UQ mechanism and mean-covariance coupling identified in the MNN remain valid in the SNN.

We note several advantages of the MNN over the SNN:
First, different from the indifferentiable spike trains in the SNN, the moment activations of the MNN (Eq.~\eqref{eq:moment_activation}, \mm) are differentiable, enabling easier training.
Second, the MNN requires only a single pass to estimate the mean and covariance of the network state without fluctuations. In contrast, the SNN must be run multiple times to suppress fluctuations and achieve accurate estimates, leading to higher computational costs.
Third, the MNN allows investigations into the role of covariance by clamping the off-diagonal elements of the covariance matrix to zero. Performing such manipulations in the SNN is considerably more challenging. These advantages establish the MNN as a more flexible framework for investigating neural mechanisms underlying cognitive functions.

\section{Further analysis}

\begin{figure}
    \centering
    \includegraphics[width=0.99\linewidth]{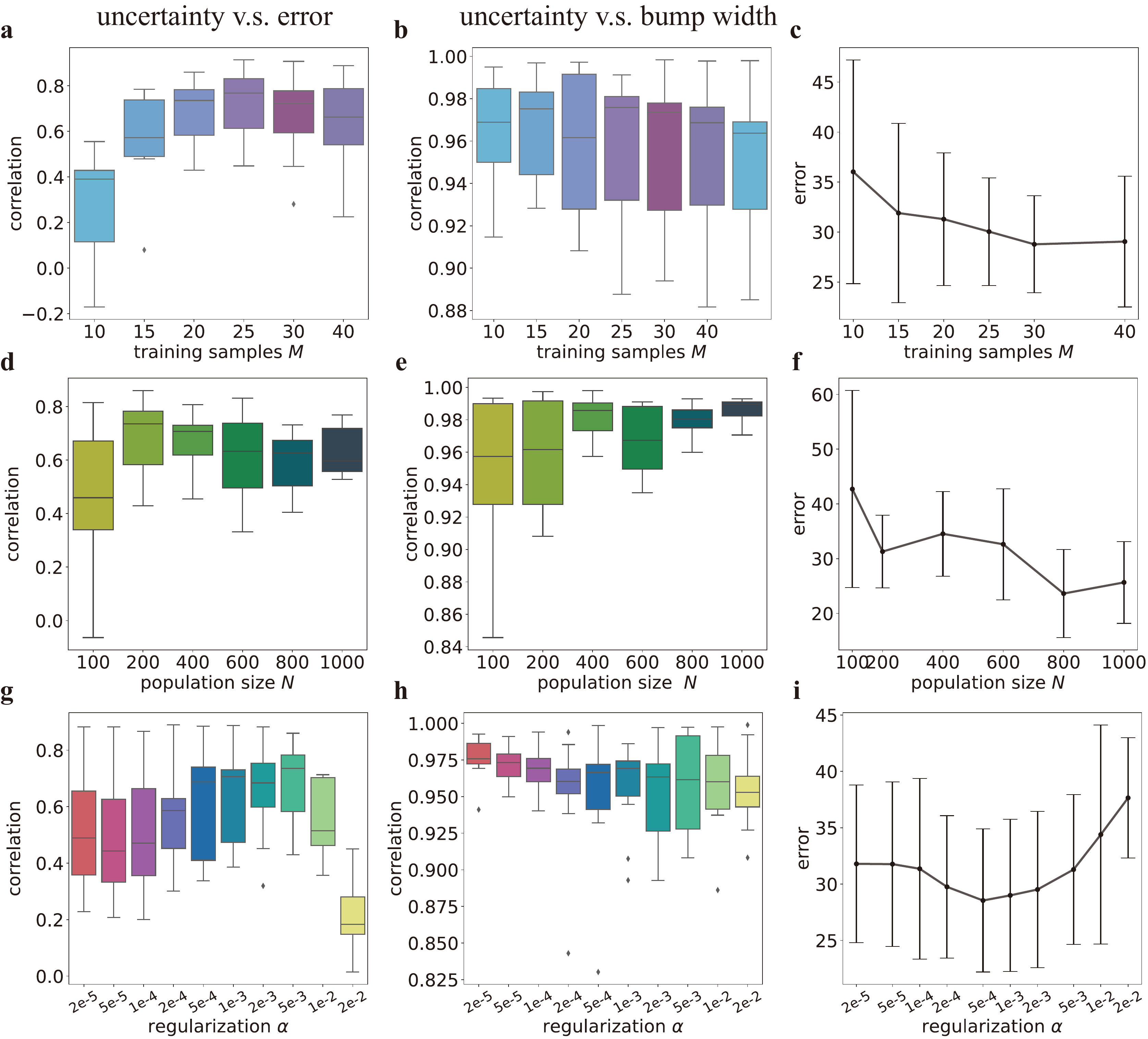}
    \caption{\textbf{Effect of training samples \(M\), population size \(N\), and regularization \(\alpha\) on uncertainty quantification, mean-covariance coupling, and precision in working memory (WM) tasks.}  
(a-c) We conduct the same experiments as in Fig.~\ref{fig:uq_result} with varying training samples \(M\), while keeping other parameters constant.  
(d-f) The same experiments as in (a-c), but with different population sizes \(N\).  
(g-i) The same experiments as in (a-c), but with different regularization values \(\alpha\).}
    \label{fig:parameter}
\end{figure}

\begin{figure}
    \centering
    \includegraphics[width=0.99\linewidth]{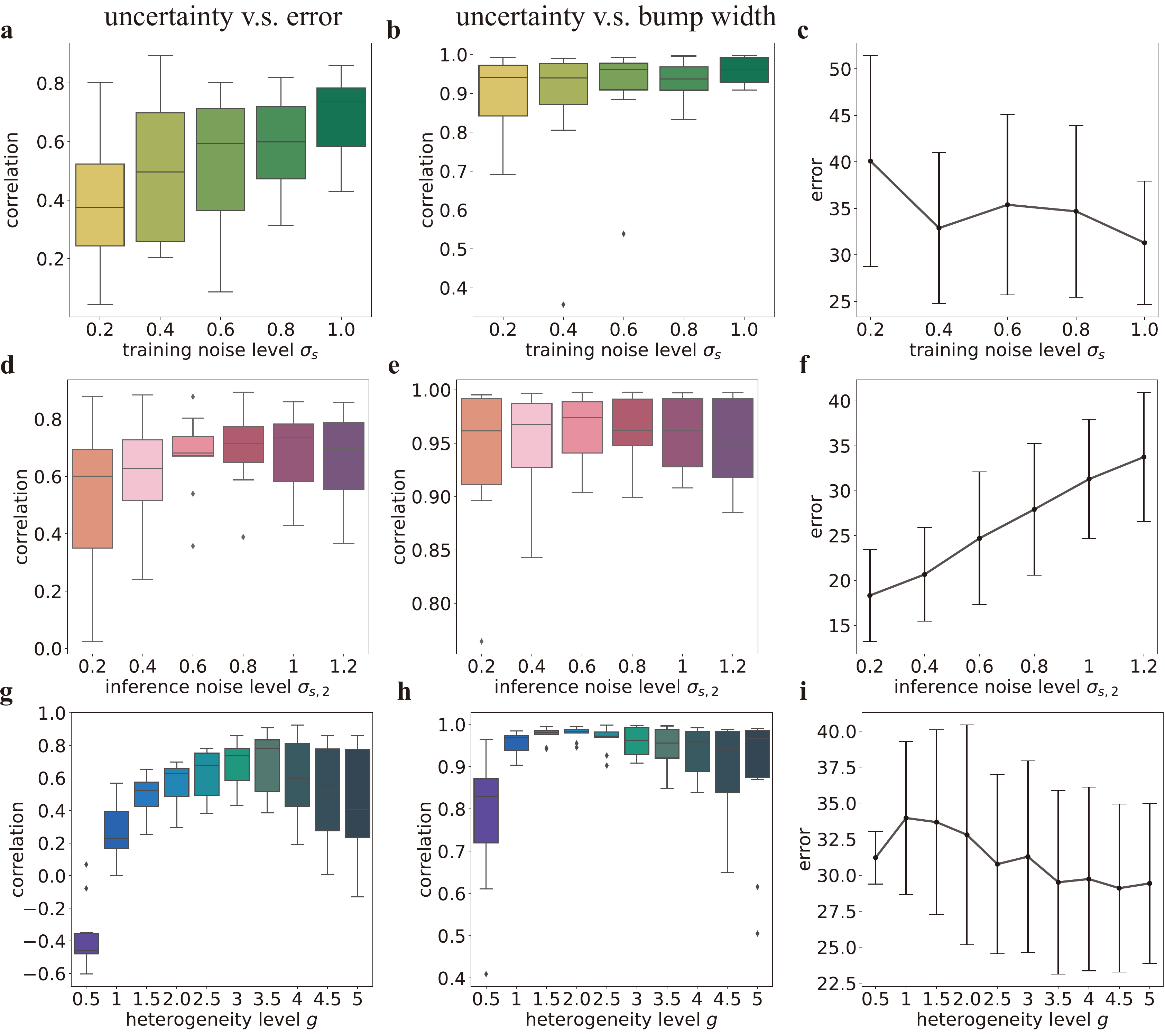}
  \caption{\textbf{Effect of level of noise at training phase \(\sigma_s\), level of noise at inference phase \(\sigma_{s,2}\), and heterogeneity \(g\) on uncertainty quantification, mean-covariance coupling, and precision in working memory (WM) tasks.} 
(a-c) We conduct the same experiments as in Fig.~\ref{fig:uq_result} with varying level of noise at training phase \(\sigma_s\), while keeping other parameters constant.  
(d-f) The same experiments as in (a-c), but with varying level of noise at inference phase \(\sigma_{s,2}\). (g-i) The same experiments as in (a-c), but with varying level of heterogeneity \(g\).}
    \label{fig:parameter2}
\end{figure}

We examined how factors including noise level and heterogeneity influence the accuracy and UQ results of the MNN. This analysis not only enhances the plausibility of our model, but also provides a deeper understanding of how these factors shape the neural processes underlying WM, particularly in terms of whether noise and heterogeneity—two inevitable properties of biological neural systems—are detrimental or beneficial to WM performance. The following results applied the uncertainty metrics I (Eq.~\eqref{eq:uncern_1}, \mm), while the results of uncertainty metrics II-IV are shown in the \si.

As shown in Fig.~\ref{fig:parameter}a, as the number of training samples $M$ increases, the quality of UQ initially improves significantly, then levels off. Previous work has shown that $M = 12$ is sufficient for training a bump attractor network~\cite{darshan2022learning}, with fewer samples proving inadequate. This suggests that the network requires a sufficient number of samples to establish the underlying low-dimensional coding space of the feature $\theta$, with further increases in $M$ providing marginal improvement.
Additionally, the correlation between uncertainty and bump width is less affected by $M$, suggesting that the coupling between the mean and covariance is less influenced by network training and is instead an intrinsic property of the MNN.
Moreover, the error level continuously decreases as $M$ increases (Fig.~\ref{fig:parameter}c), which is consistent with the intuition that more training samples lead to better performance. 

We then investigated the effect of the population size $N$. As shown in Fig.~\ref{fig:parameter}d, $N$ has a similar effect on UQ performance as the number of training samples $M$, i.e., the network requires a sufficient number of neurons (about $200$) for stable UQ performance, with further increases in $N$ providing only marginal improvement. In contrast to $M$, increasing $N$ enhances the correlation between uncertainty and bump width (Fig.~\ref{fig:parameter}e). This suggests that enlarging the neuron population strengthens the coupling between mean and covariance. Additionally, the error level decreases as $N$ increases (Fig.~\ref{fig:parameter}f), likely because total information increases sub-linearly with population size, consistent with previous findings~\cite{averbeck2006neural}.

We analyzed the effect of the regularization factor $\alpha$ (Eq.~\eqref{eq:opt}, \mm). As shown in Figs.~\ref{fig:parameter}g and \ref{fig:parameter}i, there is an optimal value of $\alpha$ that yields the best UQ performance and minimal error level, respectively. This suggests that appropriate regularization is necessary for optimal network performance. Additionally, as shown in Fig.~\ref{fig:parameter}h, the regularization parameter $\alpha$ has less impact on the correlation between uncertainty and bump width, further supporting the idea that the coupling between mean and covariance is an intrinsic property of the MNN, largely unaffected by training.

Next, we analyzed the effect of noise. As shown in Figs.~\ref{fig:parameter2}a and \ref{fig:parameter2}c, the noise level during training, $\sigma_s$, significantly improves both UQ performance and accuracy. This suggests that noise plays a key role in error-awareness, underscoring its benefits for learning in the brain and highlighting its critical role in training neural networks for cognitive functions. The likely mechanism is that noise enhances the robustness of the network. Additionally, as shown in Fig.~\ref{fig:parameter2}b, $\sigma_s$ also increases the correlation between uncertainty and bump width, thereby strengthening the coupling between mean and covariance. As shown in Figs.~\ref{fig:parameter2}d-e, the noise level during inference, $\sigma_{s,2}$, increases the correlation between uncertainty and both error and bump width. Furthermore, increasing $\sigma_{s,2}$ raises the error level, as expected (Fig.~\ref{fig:parameter2}f). These results show that the MNN can maintain strong UQ performance across different levels of external noise, even when the training noise level is fixed. This suggests that the MNN generalizes its UQ ability to various noise conditions, a property that may be crucial for adaptation to new environments based on learned knowledge, as seen in humans and other species. We also provide a theoretic interpretation of such generalization ability in Thm.~\ref{thm:consistency} in the \si.

Finally, we investigated the effect of heterogeneity. As shown in Figs.~\ref{fig:parameter2}g-h, when heterogeneity $g$ is low, both the correlation between uncertainty and error, and the correlation between uncertainty and bump width are also low, resulting in poor UQ. This suggests that some degree of heterogeneity is necessary for the mean-covariance coupling. Interestingly, as shown in Fig.~\ref{fig:parameter2}i, the error level first increases and then decreases as heterogeneity increases. This result is similar to that observed in~\cite{kilpatrick2013optimizing}, where heterogeneity was introduced into a network to improve coding precision. The difference is that in our study, heterogeneity is unstructured, whereas in~\cite{kilpatrick2013optimizing} it is spatially structured. Thus, our results extend this earlier work by showing that the model can benefit from unstructured heterogeneity, which is more commonly observed in biological neural systems.
\section{Discussion}
Humans are prone to making mistakes in their working memory (WM), yet they can generate a sense of confidence that accurately reflects the level of error. In this study, we investigate the neural mechanisms underlying uncertainty quantification (UQ) using the moment neural network (MNN), which links probabilistic population coding and sampling-based coding theories through its nonlinear coupling of the mean and covariance. Trained with reservoir computing to encode a periodic feature, the MNN demonstrates UQ performance comparable to that of humans (Fig.~\ref{fig:uq_result}). These results are verified in an SNN using the trained MNN weights, resulting in a spike-based, heterogeneous, and sparse implementation of WM with its uncertainty quantification. Furthermore, we show that noise and heterogeneity, often seen as detrimental, actually enhance WM performance within a certain range (Figs.~\ref{fig:parameter2}d-i), suggesting they are essential for task performance rather than mere byproducts.

We establish a direct connection between probabilistic population coding and sampling-based coding theories, two prominent theories of how neuronal activity encodes uncertainty. As illustrated in Fig.~\ref{fig:mechanism}, we demonstrate that the bump width and the level of uncertainty calculated from the covariance are tightly coupled. The bump width serves as an indicator of uncertainty within the framework of probabilistic population coding, while the covariance acts as the uncertainty indicator in sampling-based coding. This coupling also explains why both theories are able to account for a wide range of experiments.
More importantly, we propose a hypothesis for how covariance faithfully captures uncertainty: the error arises from the random drift of the bump, which amplifies the differential covariance component within the covariance, thereby increasing the level of uncertainty. Supporting this hypothesis, we find that the level of covariance is strongly correlated with the amount of differential covariance (Fig.~\ref{fig:mechanism}e).

This study generates several testable predictions. First, the correlation between neural firing is crucial for both effective UQ (Fig.~\ref{fig:uq_result}e) and mean-covariance coupling (Fig.~\ref{fig:mechanism}b). 
This can be tested by decorrelating neuronal recording data to assess whether UQ performance is impaired and whether mean-covariance coupling disappears. Second, increasing the size of the neural population enhances UQ performance (Fig.~\ref{fig:parameter}d), although this effect may saturate once the population size exceeds a certain threshold. This prediction could be tested by comparing the number of neurons across different species and evaluating their respective quantification abilities. Third, the system may benefit from learning under noisy conditions (Fig.~\ref{fig:parameter2}a-c). This prediction could be tested by designing experiments where subjects learn a task under varying levels of noise, then assess how performance changes with different levels of training noise. Fourth, network heterogeneity may enhance both UQ and coding precision (Figs.~\ref{fig:parameter2}g-i). This hypothesis could be tested by quantifying heterogeneity across subjects and correlating it with performance.

The trade-off between model realism and feasibility for study and analysis is a significant challenge in neuroscience. Rate-based neural networks are widely used due to their ease of analysis and training; however, they sacrifice many biologically realistic features, such as spike communication and the intrinsic nonlinear dynamics of firing covariance, which play a crucial role in uncertainty quantification, as demonstrated in this work. On the other hand, spiking neural networks (SNNs) are more biologically accurate but much harder to analyze and train. The MNN applied in this work strikes a balance, combining the benefits of both approaches: the moment activations are differentiable, making the MNN easy to analyze and train like a rate-based neural network, while it captures the first two moments of the SNN, thus serving as a faithful substitute. Similar to our MNN, Echeveste et al. (2020) analyzed the dynamics of the first two moments of a neural network~\cite{echeveste2020cortical}. However, their work focuses on a rate-based neural network with additive noise, meaning that their network weights cannot be directly transferred to an SNN to replicate similar performance. In contrast, our verification experiments (Fig.~\ref{fig:snn}) demonstrate that by training an MNN via reservoir computing and transferring its weights to an SNN, we can achieve performance in WM tasks that is comparable to the MNN.

Our results shed light on how the brain learns for UQ. Notably, the MNN achieves high UQ performance despite its loss function (Eq.~\eqref{eq:opt}, \mm) does not explicitly incorporate firing covariance, the indicator of the uncertainty. This means that only the mean is directly supervised for the task, while the covariance is indirectly adjusted through the nonlinear coupling between the mean and covariance. This provides an example of how higher-order statistics of a system can be regulated through first-order statistics. These findings also offer valuable insights for designing UQ algorithms in machine learning, where UQ plays a critical role in applications such as active learning, reinforcement learning, domain adaptation, and security~\cite{gawlikowski2023survey}. Specifically, in a supervised learning setting, it suggests an approach where only the mean (prediction) of a model output is supervised, while the output covariance adapts unsupervised, allowing the covariance to learn to quantify the uncertainty associated with the mean.

Aitchison et al. (2021) suggest that uncertainty can be encoded in synaptic weights and regulate the synaptic adaptation rate~\cite{aitchison2021synaptic}. This implies that the uncertainty encoded in the MNN may regulate neuronal adaptation, playing a role in broader learning processes. Additionally, an intriguing direction for future research would be to explore how uncertainty is jointly or cooperatively encoded in the synaptic weights and neuron states and how they interact.

More broadly, the MNNs can be applied to model more complex cognitive processes. Since WM forms the foundation for higher-level functions, errors in WM can propagate to these processes, influencing the observable performance. Our study could be extended by analyzing how the firing covariance and the associated uncertainty in WM propagates to higher cognitive functions, offering insights into how humans quantify uncertainty in those tasks. In conclusion, our study provides opportunities for a deeper understanding of UQ and offers valuable insights into how biological systems manage uncertainty in noisy environments.
\section*{Methods}

{\bf Spiking neural networks (SNNs). }
 We applied leaky integrate-and-fire (LIF) neural model as for the SNN in the study. The dynamics of an SNN with $N$ LIF neurons is defined as
\begin{align}\label{eq:LIF}
\left\{\begin{matrix}
\frac{d\mathbf{u}}{dt} &=& -L\mathbf{u}+\mathbf{I},\\
\tau_s\frac{d\mathbf{I}}{dt} &=& -\mathbf{I}+W\mathbf{s},
\end{matrix}\right.
\end{align}
where $\mathbf{u}\in\mathbb{R}^N$ is the membrane potential each neuron, $L\in\mathbb{R}$ is the leak conductance, $W\in\mathbb{R}^{N\times N}$ is the synaptic weights, $\mathbf{I\in\mathbb{R}^N}$ is the synaptic currents, $\tau_s$ is the synaptic time constant, and $\mathbf{s}\in\mathbb{R}^N$ is the spike train generated by the neuron population. 
For \( i = 1, \ldots, N \), when \( u_i \) reaches the firing threshold \( u_{\text{th}} \), the neuron emits a spike (\( s_i = 1 \)) that is transmitted to connected neurons. Following the spike, \( u_i \) is reset to the resting potential \( u_{\text{res}} \) and enters a refractory period of duration \( T_{\text{ref}} \). If \( u_i \) does not reach \( u_{\text{th}} \), no spike is emitted (\( s_i = 0 \)).
Throughout this work, we set neuron parameters to be $u_{\rm th}=20$ mV, $u_{\rm res}=0$ mV, $T_{\rm ref}=5$ ms, $L=0.05$ ms$^{-1}$, and $\tau_s = 10$ ms.  

{\bf Moment neural networks (MNNs). }
We employed a neural model called as the MNN, which is derived from the LIF neural model~\cite{feng2006dynamics,lu2010gaussian}. Different from common rate-based neural model using elementiwise nonlinearity such as tanh and sigmoid.
The moment activations $\phi_{\mu}\in\mathbb{R}^{N}\times\mathbb{R}^{N\times N}\rightarrow\mathbb{R}^N$ and $\phi_{C}\in\mathbb{R}^{N}\times\mathbb{R}^{N\times N}\rightarrow\mathbb{R}^{N\times N}$ together map the mean $\bar{\bm{\mu}}$ and covariance $\bar{C}$ of the steady-state input current $\mathbf{I}=W\mathbf{s}$ to that of the output spikes according to~\cite{feng2006dynamics,lu2010gaussian}
\begin{align}\label{eq:moment_activation}
\begin{aligned}
    &\phi_{\mu}(\bar{\bm{\mu}},\bar{C})_i=  \big(T_{\rm ref} + \tfrac{2}{L}\int_{I_{\mathrm{ub},i}}^{I_{\mathrm{lb},i}} g(x) dx\big)^{-1},\\
    &\phi_{C}(\bar{\bm{\mu}},\bar{C})_{ij}  = 
    \left\{\begin{matrix} 
    &\frac{8}{L^2}\phi_{\mu}(\bar{\bm{\mu}},\bar{C})_{i}^{3}\textstyle\int_{I_{\mathrm{ub},i}}^{I_{\mathrm{lb},i}} h(x) dx,\quad &i=j\\
    &\big(\frac{\partial\phi_{\mu}(\bar{\bm{\mu}},\bar{C})}{\partial\bar{\bm{\mu}}}\big)_{ii}\big(\frac{\partial\phi_{\mu}(\bar{\bm{\mu}},\bar{C})}{\partial\bar{\bm{\mu}}}\big)_{jj}\bar{C}_{ij},\quad &i \neq j
\end{matrix}\right.,
\end{aligned}
\end{align}
where $g(x)$ and $h(x)$  are Dawson-like functions and defined as
\begin{align}
    g(x)=e^{x^2}\int_{-\infty}^x e^{-u^2}du, \quad
    h(x)=e^{x^2}\int_{-\infty}^x e^{-u^2}[g(u)]^2du.
\end{align}
and integration bounds are calculated as
\begin{align}
    I_{\mathrm{ub},i} = \frac{u_{\rm th}L-\bar{\mu}_i}{\sqrt{LC_{ii}}},\quad
    I_{\mathrm{lb},i} = \frac{u_{\rm res}L-\bar{\mu}_i}{\sqrt{LC_{ii}}}
\end{align}

{\bf Simulation of MNNs. }
All simulations were performed using the Euler method with a timestep \( dt = 0.1 \tau \) and \( \tau = 1 \) ms. The simulations were run on a single NVIDIA 3090 GPU. For the results presented in Fig.~\ref{fig:uq_result}-\ref{fig:parameter2}, each test was repeated 10 times. An efficient numerical algorithm was used to implement the moment neural network, as described in~\cite{PhysRevE.110.024310}.

{\bf Network connection and readout matrix. }
The connection weight matrix is supposed to have a low-rank structure with additive quenched noise that brings heterogeneity: 
\begin{align}
    W = W_{\mathrm{fb}}W_{\mathrm{out}}^\top + gJ,
\end{align}
where the random matrix $J_{ij}\sim \mathcal{N}(0,1/N)$ is fixed after being generated, and $g>0$ is the heterogeneity level. The matrix $W_{\mathrm{fb}}\in\mathbb{R}^{2\times N}$ is fixed as $W_{1j,fb} = \cos(2 \pi j/N), W_{2j,fb} = \sin(2 \pi j/N)$, following~\cite{darshan2022learning}, and the only trainable parameters are $W_{\mathrm{out}}\in\mathbb{R}^{2\times N}$, which is also the readout matrix.
We set the heterogeneity \( g=3 \) throughout the work, except in Fig.~\ref{fig:parameter2}, where we vary \( g \) from 0.5 to 5 to investigate its effect on the performance of the MNN. We set \( N=200 \) throughout the work, except in Fig.~\ref{fig:parameter}, where we vary the population size \( N \) from 100 to 1000 to investigate its effect on the performance of the MNN.

{\bf Encoding and decoding in the MNN. }
To input a cue of feature $\theta$ to the MNN, we encode it as the external current as $\bm{\mu}_s = W_{\mathrm{fb}}\mathbf{z}$ with
\begin{align}\label{eq:encode_z}
    \mathbf{z}_1 = A\cos(\theta),\quad \mathbf{z}_2 = A\sin(\theta),
\end{align}
where we set $A=1.2$ throughout the work.
When the state of the network as $(\bm{\mu},C)$, the decoded mean and covariance are
\begin{align}
    \hat{\bm{\mu}}_z= W_{\mathrm{out}}^\top \bm{\mu},\quad \hat{C}_{z}= W_{\mathrm
    {out}}^\top C W_{\mathrm{out}}.
\end{align}
The feature is decoded by finding the angle $\hat{\theta}$ satisfying
\begin{align}
    \cos(\hat{\theta})=\frac{\hat{\bm{\mu}}_{z,1}}{\sqrt{\hat{\bm{\mu}}_{z,1}^2+\hat{\bm{\mu}}_{z,2}^2}},\quad
    \sin(\hat{\theta})=\frac{\hat{\bm{\mu}}_{z,2}}{\sqrt{\hat{\bm{\mu}}_{z,1}^2+\hat{\bm{\mu}}_{z,2}^2}}.
\end{align}
To quantify the uncertainty associated with the decoded covariance $\hat{C}_{z}$ by considering the entropy of the Gaussian distribution $\mathcal{N}(\hat{\bm{\mu}}_z,\hat{C}_{z})$, which is a common indicators of uncertainty of a distribution. Please see the {\bf uncertainty metrics} section in \mm.

{\bf Network training. }
When uniformly select $M$ features from $[0,2\pi)$ denoted as $\theta^{(m)},m=1,\ldots,M$. For each feature $\theta^{(m)}$, we encode it as the external current $\bm{\mu}^{(m)}_s = W_{\mathrm{fb}}\mathbf{z}$ using Eq.~\eqref{eq:encode_z}, and set
\begin{align}
    &\bar{\bm{\mu}} = gJ\bm{\mu}+W_{\mathrm{fb}}\mathbf{z}+\bm{\mu}_s^{(m)}\\
    &\bar{C} = g^2JCJ^{\top}+\bm{\sigma}_s^2I
\end{align}
in Eq.~\eqref{eq:mnn}. We then simulate the network until its state convergence to a fixed point, denoted as $(\bm{\mu}^{(m)},C^{(m)})$.
After collecting the fixed points for every training feature $\bm{\mu}^{(m)}$, we train \( W_{\mathrm{out}} \) by minimizing the loss function
\begin{align}\label{eq:opt}
    \sum_{i=m}^M\left\|W_{\mathrm{out}}^\top \bm{\mu}^{(m)} - \mathbf{z}^{(m)}\right\|^2 + \alpha \left \| W_{\mathrm{out}} \right\|_2^2,
\end{align}
where $\alpha$ is the factor of regularization. Note that the covariance is not contained in the loss function. We set \( M=20 \), \( \sigma_s^2=1 \), and \( \alpha=5\times 10^{-3} \) throughout the work, except in Fig.~\ref{fig:parameter} and Fig.~\ref{fig:parameter2}, where we vary these parameters to investigate their respective effects on the performance of the MNN.

{\bf Network inference. }
During the cue period, to input a variable value $\theta^{\ast}$ for the model to hold as a memory item, we calculate the corresponding external input $\mathbf{z}^{\ast}$ and add external noise to the network by setting $\bm{\mu}_s$ as $W_{\mathrm{fb}}\mathbf{z}^\ast+\bm{\xi}_t$, where $\bm{\xi}_t\sim\mathcal{N}(\mathbf{0},\sigma_{\xi}^2I)$ is a Gaussian noise term, with $\sigma_{\xi}\in\mathrm{Unif}[0,1]$ independently sampled for each instance. The mean $\bar{\bm{\mu}}$ and covariance $\bar{C}$ of the synaptic inputs in Eq.~\eqref{eq:mnn} as
\begin{align}\label{eq:inference_input}
    &\bar{\bm{\mu}} = (gJ+W_{\mathrm{fb}}W_{\mathrm{out}}^\top)\bm{\mu}+\bm{\mu}_s\\
    &\bar{C} = (gJ+W_{\mathrm{fb}}W_{\mathrm{out}}^\top)C(gJ+W_{\mathrm{fb}}W_{\mathrm{out}}^\top)^{\top}+{\sigma}_{s,2}^2I,
\end{align}
where $\sigma_{s,2}$ is the noise level during the inference phase. The cue period lasts for $500$ time steps, after that we remove the external inputs by setting $\bm{\mu}_s=0$. Due the delay period, we simulate the network for $7500$ steps. After the delay period, we decode the remembered feature from the network.
The remembered variable is decoded from the mean as $\bm{\mu}_z=W_{\mathrm{out}}^\top\bm{\mu}$, and the corresponding covariance is calculated as $\hat{C}_{z}=W_{\mathrm{out}}^\top C W_{\mathrm{out}}$.
We calculate the error of the network as $e = |\hat{\theta}-\theta^{\ast}|$, where $\hat{\theta}$ and $\theta^{\ast}$ are the angles corresponds to $\hat{\bm{\mu}}_z$ and $\mathbf{z}^{\ast}$ respectively.

{\bf Uncertainty metrics. }
Based on the entropy of a Gaussian distribution, we design the following two uncertainty indicators
\begin{align}\label{eq:uncern_1}
    \text{(I): }\kappa = \mathrm{det}|\hat{C}_{z}|,\quad
    \text{(II): }\kappa = \log\mathrm{det}|\hat{C}_{z}|.
\end{align}
Denote the normal direction of the tangent space of $\mathcal{M}$ at $\hat{\bm{\mu}}_z$ as $n_{\hat{\bm{\mu}}_z}\in\mathbb{R}^2$, then we design another two uncertainty indicators
\begin{align}\label{eq:uncern_2}
    \text{(III): }\kappa = \mathrm{det}|n_{\hat{\bm{\mu}}_z}^\top\hat{C}_{z}n_{\hat{\bm{\mu}}_z}|,\quad
    \text{(IV): }\kappa = \log\mathrm{det}|n_{\hat{\bm{\mu}}_z}^\top\hat{C}_{z}n_{\hat{\bm{\mu}}_z}|.
\end{align}
These two uncertainty metrics measure the uncertainty along the tangent space of $\hat{\bm{\mu}}_z$ the task-related direction.

{\bf Correlation between the uncertainty metrics and the error. }
After trained a network, we select $L$ different angle samples $\theta_1^{\ast},\ldots,\theta_L^{\ast}$ and generate the corresponding external stimuli, which are exerted to the network separately. For each $\theta_i^{\ast}$, we calculate the error $e_i$ and the uncertainty $u_i$ ($u_i$ is calculated through one of I-IV indicators). We then estimate the correlation
\begin{align}\label{eq:rho}
    \rho = \frac{1}{L}\sum_i \kappa_i e_i - \frac{1}{L^2}\sum_i \kappa_i \sum_i e_i.
\end{align}
A large (positive) $\rho$ means that the uncertainty quantification is good. Throughout this work, we set $L=500$ and uniformly select $\theta_1^{\ast},\ldots,\theta_L^{\ast}$ from $[0,2\pi)$.

{\bf Differential covariance ratio (DCR). }We define the differential covariance ratio (DCR) as follows
\begin{align}
  \mathrm{Diff}(C) :=\frac{\sum_{ij}|C_{ij}|^2}{\sum_{ij}|C_{ij}|^2+\sum_{ij}|C_{ij}+C_{i,2N-j}+C_{i,2N-j}+C_{2N-i,2N-j}|^2} .
\end{align}
The DCR quantifies the differential covariance within the input covariance matrix \(C\), as the differential covariance induced by the drift of the bump exhibits a distinct pattern of sign relations (see Fig.~\ref{fig:mechanism})
\begin{align}
    \mathrm{sgn}(C_{ij}) =-\mathrm{sgn}(C_{2N-i,j}) = -\mathrm{sgn}(C_{i,2N-j}) = \mathrm{sgn}(C_{2N-i,2N-j})=1.
\end{align}
The higher the DCR, the larger the component of differential covariance in \(C\). The maximum value of the DCR is 1.

{\bf Simulation and decoding of the spiking neural network. }
For Fig.~\ref{fig:snn}, we simulated an SNN using the LIF neural model (Eq.~\eqref{eq:LIF}).
The durations of the cue and delay periods were set to match those used in the MNN simulation. To decode the feature and its uncertainty from the spike trains, we select a time interval \([T_{\mathrm{start}}, T_{\mathrm{end}}]\), and divided it into \(K = \tfrac{T_{\mathrm{end}} - T_{\mathrm{start}}}{\Delta T}\) time windows, where the window length \(\Delta T\) was chosen such that it evenly divided \(T_{\mathrm{end}} - T_{\mathrm{start}}\). For each time window indexed by \(k\) (\(k = 1, \ldots, K\)), we calculated the mean firing rate of each neuron within the window, denoted as \(\bar{\bm{r}}^{(k)}\in\mathbb{R}^N\). 
In this study, we set $T_{\mathrm{start}}=2000$, $T_{\mathrm{end}}=7500$ and$\Delta T=500$.
The mean firing rate and firing covariance of the SNN were then estimated as:
\[
\bm{\mu}_{\mathrm{snn}} = \frac{1}{K}\sum_{k=1}^K \bar{\bm{r}}^{(k)}, \quad
C_{\mathrm{snn}} = \frac{1}{K}\sum_{k=1}^K \left(\bar{\bm{r}}^{(k)} - \bm{\mu}_{\mathrm{snn}}\right)\left(\bar{\bm{r}}^{(k)} - \bm{\mu}_{\mathrm{snn}}\right)^\top.
\]
The subsequent decoding procedure was identical to that used for the MNN. For details, refer to the {\bf Encoding and decoding in the MNN} section.

\bibliographystyle{unsrt}  
\bibliography{reference}  

\clearpage

\begin{flushleft}
{\Large
\textbf{Supplementary Information:\\
Uncertainty Quantification in Working Memory via Moment Neural Networks
} %
}
\newline
{
  Hengyuan Ma\textsuperscript{1},
  Wenlian Lu\textsuperscript{1,2,3,4,5,6},
  Jianfeng Feng\textsuperscript{1,2,3,4,7$\ast$} 
\\

\bigskip
\it{1} Institute of Science and Technology for Brain-inspired Intelligence, Fudan University, Shanghai 200433, China
\\
\it{2} Key Laboratory of Computational Neuroscience and Brain-Inspired Intelligence (Fudan University), Ministry of Education, China
\\
\it{3} School of Mathematical Sciences, Fudan University, No. 220 Handan Road, Shanghai, 200433, Shanghai, China
\\
\it{4} Shanghai Center for Mathematical Sciences, No. 220 Handan Road, Shanghai, 200433, Shanghai, China
\\
\it{5} Shanghai Key Laboratory for Contemporary Applied Mathematics, No. 220 Handan Road, Shanghai, 200433, Shanghai, China
\\
\it{6} Key Laboratory of Mathematics for Nonlinear Science, No. 220 Handan Road, Shanghai, 200433, Shanghai, China 
\\
\it{7} Department of Computer Science, University of Warwick, Coventry, CV4 7AL, UK
\newline
$\ast$ jffeng@fudan.edu.cn 
}
\end{flushleft}

\setcounter{figure}{0}
\renewcommand{\thesection}{S\arabic{section}}
\renewcommand{\thefigure}{S\arabic{figure}}
\renewcommand{\thetable}{S\arabic{table}}
\renewcommand{\thetheorem}{S\arabic{theorem}}
\setcounter{section}{0}
\setcounter{equation}{0}
\renewcommand{\theequation}{S\arabic{equation}}
\setcounter{page}{1}

\section{Theorems and their proofs}
Here, a theoretical analysis of an abstract neural system is presented, demonstrating how the system can learn to output covariance that effectively quantifies uncertainty by training under a loss function that explicitly supervises only the mean of the network output.

We represent the neural system as a simplified parameterized function \( f(\mathbf{x}, \bm{\vartheta}) : \mathbb{R}^m \to \mathbb{R} \), where \(\bm{\vartheta}\) denotes the learnable parameters, and $\mathbf{x}$ is the input. Let \( f^{\ast}(\mathbf{x}) : \mathbb{R}^m \to \mathbb{R} \) represent the ground-truth neural system that the model aims to approximate. Internal and external noise are introduced to both systems: under specified noise conditions, the neuron’s inference result is computed as
\begin{align}
y(\mathbf{x},C_{\xi},\sigma^2_{\eta};\bm{\vartheta}) = f(\mathbf{x}+\bm{\xi}_1;\bm{\vartheta})+\eta_1,
\end{align}
and the ground-truth result is
\begin{align}
    y^{\ast}(\mathbf{x},C_{\xi},\sigma^2_{\eta}) = f^{\ast}(\mathbf{x}+\bm{\xi}_2)+\eta_2,
\end{align}
where $\bm{\xi}_1,\bm{\xi}_2\sim\mathcal{N}(\mathbf{0},C_{\xi})$ is the internal noise with covariance matrix $C_{\xi}\in\mathbb{R}^{m\times m}$, and $\eta_1,\eta_2\sim\mathcal{N}(0,\sigma^2_{\eta})$ is the external noise with the variance $\sigma^2_{\eta}$. We suppose that $C_{\xi}$ is strictly positive-definite, and $\sigma^2_{\eta}>0$.
 $\bm{\xi}_1,\bm{\xi}_2,\eta_1,\eta_2$ are independent from each other. The model output mean and output variance are calculated as
\begin{align}
    &m(\mathbf{x},C_{\xi},;\bm{\vartheta}) = \mathbb{E}[y(\mathbf{x},C_{\xi},\sigma^2_{\eta};\bm{\vartheta})], \\
    &v(\mathbf{x},C_{\xi},\sigma^2_{\eta};\bm{\vartheta}) = \mathbb{E}[\big(y(\mathbf{x},C_{\xi},\sigma^2_{\eta};\bm{\vartheta})-m(\mathbf{x},C_{\xi};\bm{\vartheta})\big)^2],
\end{align}
and the ground-truth mean and variance is
\begin{align}
    &m^{\ast}(\mathbf{x},C_{\xi}) = \mathbb{E}[y^{\ast}(\mathbf{x},C_{\xi},\sigma^2_{\eta})], \\
    &v^{\ast}(\mathbf{x},C_{\xi},\sigma^2_{\eta}) = \mathbb{E}[\big(y^{\ast}(\mathbf{x},C_{\xi},\sigma^2_{\eta})-m(\mathbf{x},C_{\xi})\big)^2].
\end{align}
Note that both \(m^{\ast}\) and \(m\) are not influenced by \(\sigma^2_{\eta}\) since \(\eta\) vanishes after taking the expectation, while both \(m^{\ast}\) and \(m\) are influenced by \(C_{\xi}\) due to the nonlinearity of \(f^{\ast}\) and \(f\), respectively. The neural system is trained to minimize the following loss, which only supervises the mean under the noise conditions \(C_{\xi}\) and \(\sigma^2_{\eta}\) during training 
\begin{align}
   \mathcal{L}(\bm{\vartheta},C_{\xi})= \int_{\mathbf{x}\in\mathbb{R}^m}\big(m^{\ast}(\mathbf{x},C_{\xi};\bm{\vartheta})-m(\mathbf{x},C_{\xi};\bm{\vartheta})\big)^2p(\mathbf{x})d\mathbf{x},
\end{align}
where $p(\mathbf{x})$ is the distribution of the input, which is assumed to be supported on the whole $\mathbb{R}^m$.

We first prove a theorem demonstrating that the neural system can learn the ground-truth output variance by minimizing the loss \(\mathcal{L}(\bm{\vartheta}, C_{\xi})\), which supervises only the mean output.
\begin{theorem}\label{thm:consistency}
    Supposed that the model parameter $\bm{\vartheta}$ diminishes the loss $\mathcal{L}(\bm{\vartheta},C_{\xi})$, then the model learns the ground-truth output mean and variance at the same time for any noise conditions with positive-definite ${C}'_{\xi}$, and ${\sigma}'^2_{\eta}$ 
    \begin{align}
       &m(\mathbf{x},{C}'_{\xi};\bm{\vartheta}) =  m^{\ast}(\mathbf{x},{C}'_{\xi}),\label{eq:mean_right}\\
      &v(\mathbf{x},{C}'_{\xi},{\sigma}'^2_{\eta};\bm{\vartheta}) =  v^{\ast}(\mathbf{x},{C}'_{\xi},{\sigma}'^2_{\eta}).\label{eq:cov_right}
    \end{align}
\end{theorem}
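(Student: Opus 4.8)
The plan is to show that vanishing of the mean-only loss forces the learnable response $f(\cdot;\bm{\vartheta})$ to coincide with the ground-truth response $f^{\ast}$ as a function on $\mathbb{R}^m$, after which both conclusions become immediate, because the mean and variance at \emph{any} noise level are determined by $f$ alone. First I would unpack the loss. Since $\eta_1$ has mean zero and is independent of $\bm{\xi}_1$, the model mean is the Gaussian smoothing $m(\mathbf{x},C_\xi;\bm{\vartheta}) = \mathbb{E}_{\bm{\xi}\sim\mathcal{N}(\mathbf{0},C_\xi)}[f(\mathbf{x}+\bm{\xi};\bm{\vartheta})] = (f(\cdot;\bm{\vartheta}) \ast g_{C_\xi})(\mathbf{x})$, where $g_{C_\xi}$ denotes the density of $\mathcal{N}(\mathbf{0},C_\xi)$; likewise $m^{\ast}(\mathbf{x},C_\xi) = (f^{\ast} \ast g_{C_\xi})(\mathbf{x})$. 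Both are continuous (indeed smooth) in $\mathbf{x}$ because convolution with a Gaussian is smoothing. The loss $\mathcal{L}(\bm{\vartheta},C_\xi)$ integrates the nonnegative continuous integrand $(m^{\ast}-m)^2$ against $p(\mathbf{x})$, which is strictly positive on all of $\mathbb{R}^m$; hence $\mathcal{L}=0$ forces $m(\mathbf{x},C_\xi;\bm{\vartheta}) = m^{\ast}(\mathbf{x},C_\xi)$ for every $\mathbf{x}$. Writing $d := f(\cdot;\bm{\vartheta}) - f^{\ast}$, this is exactly $d \ast g_{C_\xi} \equiv 0$.

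The crux is the injectivity of Gaussian convolution, from which I would conclude $d \equiv 0$. Passing to Fourier transforms, $\widehat{d\ast g_{C_\xi}} = \widehat{d}\,\widehat{g_{C_\xi}}$, and the Gaussian transform $\widehat{g_{C_\xi}}(\bm{\omega}) = \exp(-\tfrac12 \bm{\omega}^\top C_\xi \bm{\omega})$ is strictly positive everywhere because $C_\xi$ is positive-definite; since the product vanishes, $\widehat{d}=0$ and therefore $d=0$. This is the step I expect to be the main obstacle, since the Fourier manipulation is only clean when $d$ is regular enough (e.g. $d\in L^2$, where $\widehat{d}\in L^2$ and the pointwise identity $\widehat{d}\cdot\widehat{g_{C_\xi}}=0$ directly gives $\widehat{d}=0$ a.e.). For merely bounded $f,f^{\ast}$ — the natural regime for bounded neural responses — one cannot simply divide by $\widehat{g_{C_\xi}}$, whose reciprocal grows super-exponentially; I would instead treat $d$ as a tempered distribution and argue through the heat semigroup. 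Setting $u(t) := d\ast g_{tC_\xi}$, the relation $u(1)=0$ gives $u(t)=u(1)\ast g_{(t-1)C_\xi}=0$ for all $t\ge 1$, and since solutions of the (anisotropic) heat equation are real-analytic in $t>0$, analytic continuation forces $u(t)\equiv 0$ for all $t>0$, whence $d=\lim_{t\to 0^+}u(t)=0$. Either route establishes $f(\cdot;\bm{\vartheta})=f^{\ast}$ pointwise, under the mild regularity the setting supplies.

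With $f(\cdot;\bm{\vartheta})=f^{\ast}$ in hand, both conclusions follow for any positive-definite $C'_\xi$ and any $\sigma'^2_\eta$. For the mean, $m(\mathbf{x},C'_\xi;\bm{\vartheta}) = f(\cdot;\bm{\vartheta})\ast g_{C'_\xi} = f^{\ast}\ast g_{C'_\xi} = m^{\ast}(\mathbf{x},C'_\xi)$, which is Eq.~\eqref{eq:mean_right}. For the variance, independence of $\bm{\xi}_1$ and $\eta_1$ decomposes $v(\mathbf{x},C'_\xi,\sigma'^2_\eta;\bm{\vartheta}) = \mathrm{Var}_{\bm{\xi}\sim\mathcal{N}(\mathbf{0},C'_\xi)}[f(\mathbf{x}+\bm{\xi};\bm{\vartheta})] + \sigma'^2_\eta$, and the identical decomposition holds for $v^{\ast}$ with $f^{\ast}$ replacing $f$ (here $m$ and $m^{\ast}$ at $C'_\xi$ already agree, so the subtracted centering is consistent). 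Since $f(\cdot;\bm{\vartheta})=f^{\ast}$, the two internal-noise variances coincide and the shared additive $\sigma'^2_\eta$ matches, giving Eq.~\eqref{eq:cov_right}. Thus mean-only supervision propagates to exact variance recovery purely through the nonlinear dependence of $f$ on its noisy input, which is the mechanism the theorem is meant to isolate.
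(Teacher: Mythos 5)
Your proof is correct and follows the same overall strategy as the paper's: both arguments reduce the theorem to showing that a vanishing mean-only loss forces $f(\cdot;\bm{\vartheta})=f^{\ast}$, after which Eqs.~\eqref{eq:mean_right}--\eqref{eq:cov_right} follow for arbitrary $C'_{\xi}$ and $\sigma'^2_{\eta}$ because the mean and variance at any noise level are functionals of $f$ alone (with the external noise contributing only the additive $\sigma'^2_{\eta}$). The difference is in how the key injectivity step is established. The paper identifies $t\mapsto m(\mathbf{x},tC_{\xi};\bm{\vartheta})$ as a solution of an anisotropic heat equation and invokes, as a black box, the backward uniqueness property of parabolic equations from the literature: agreement of two solutions at $t=1$ forces agreement of their initial data. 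You instead prove the needed injectivity of Gaussian convolution directly: you note the Fourier route ($\widehat{g_{C_\xi}}$ nowhere vanishing) works under $L^2$-type regularity, correctly flag that one cannot simply divide by a Gaussian in the tempered-distribution setting appropriate for bounded responses, and then give a self-contained semigroup argument --- $u(t)=d\ast g_{tC_\xi}$ vanishes for all $t\ge 1$ by forward propagation, real-analyticity of the heat semigroup in $t$ extends this to all $t>0$, and $d=\lim_{t\to 0^+}u(t)=0$. Your version is in effect a direct proof of backward uniqueness for the special convolution structure at hand, making it more elementary and self-contained, whereas the paper's appeal to general parabolic backward uniqueness is shorter but leaves the regularity hypotheses implicit. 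Your explicit decomposition $v=\mathrm{Var}[f(\mathbf{x}+\bm{\xi})]+\sigma'^2_{\eta}$ also makes the final step (and the role of the centering term, which the paper's definition of $v^{\ast}$ writes ambiguously) more transparent than the paper's one-line conclusion.
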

\begin{proof}
We note that the function $m(\mathbf{x},tC_{\xi};\bm{\vartheta})$ is the solution to the following parabolic differential equation
\begin{align}\label{eq:heat1}
    \left\{\begin{matrix}
    \partial_{t} w(t,\mathbf{x}) =\bigtriangledown_{\mathbf{x}} [C_{\bm{\xi}} \bigtriangledown_{\mathbf{x}} w(t,\mathbf{x})],&\quad t>0,&\quad  \mathbf{x}\in\mathbb{R}^n\\
 w(t,\mathbf{x}) = m(\mathbf{x},O;\bm{\vartheta}),&\quad t=0,&\quad \mathbf{x}\in\mathbb{R}^n\\
\end{matrix}\right.,
\end{align}
where $O\in\mathbb{R}^{m\times m}$ is the all-zero matrix.
Additionally, the function $m^{\ast}(\mathbf{x},tC_{\xi})$ is the solution to the following the same parabolic differential equation with a different initial condition
\begin{align}\label{eq:heat2}
    \left\{\begin{matrix}
    \partial_{t} w(t,\mathbf{x}) =\bigtriangledown_{\mathbf{x}} [C_{\bm{\xi}} \bigtriangledown_{\mathbf{x}} w(t,\mathbf{x})],&\quad t>0,&\quad  \mathbf{x}\in\mathbb{R}^n\\
 w(t,\mathbf{x}) = m^{\ast}(\mathbf{x},O),&\quad t=0,&\quad \mathbf{x}\in\mathbb{R}^n\\
\end{matrix}\right.,
\end{align}
Since $\mathcal{L}(\bm{\vartheta},C_{\xi},\sigma^2_{\eta})=0$, and $p(\mathbf{x})$ is support on the $\mathbb{R}^m$, we have
\begin{align}
    m(\mathbf{x},{C}_{\xi};\bm{\vartheta}) =  m^{\ast}(\mathbf{x},{C}_{\xi}),\quad \forall \mathbf{x}\in\mathbb{R}^m.
\end{align}
This means that the solutions to the two parabolic differential equations, Eq.~\eqref{eq:heat1} and Eq.~\eqref{eq:heat2}, are identical at \(t=1\). By applying the backward uniqueness property of parabolic differential equations~\cite{wu2019backward}, which states that if the solutions at a given time point \(t > 0\) are identical, then their initial states must also be identical, we have
\begin{align}
    m(\mathbf{x},O;\bm{\vartheta}) =  m^{\ast}(\mathbf{x},O),\quad \forall \mathbf{x}\in\mathbb{R}^m,
\end{align}
which is equivalent to 
\begin{align}
    f(\mathbf{x};\bm{\vartheta}) =  f^{\ast}(\mathbf{x}),\quad \forall \mathbf{x}\in\mathbb{R}^m.
\end{align}
Then both Eq.~\eqref{eq:mean_right} and Eq.~\eqref{eq:cov_right} hold for any ${C}'_{\xi},{\sigma}'^2_{\eta}$.
\end{proof}
Importantly, this theorem also suggests that the neural system inherently generalizes its ability to quantify uncertainty across different noise conditions, characterized by variations in \(C_{\xi}\) and \(\sigma^2_{\eta}\). This generalization ability has been observed in Figs.~\ref{fig:parameter2}d-f in the main paper

In general, minimizing the training loss to zero is impossible. However, we present a stronger theoretical result showing that the error in covariance can be effectively bounded by the error in the mean.
\begin{theorem}\label{thm:error_bound}
Supposed that both $f(\mathbf{x};\bm{\vartheta})$ and $f(\mathbf{x})$ are bounded by $B>0$.
Denote the loss under noise level $tI,\sigma^2_{\eta}$ as
\begin{align}
    \mathcal{L}(t;\bm{\vartheta}) = \mathcal{L}(\bm{\vartheta},tI,\sigma^2_{\eta}).
\end{align}
Define the error on the variance as
\begin{align}
\mathcal{L}_v(t;\bm{\vartheta})=\int_{\mathbf{x}}|v^{\ast}(\mathbf{x},tI,\sigma^2_{\eta}) - v(\mathbf{x},tI,\sigma^2_{\eta};\bm{\vartheta})|p(\mathbf{x})d\mathbf{x}
\end{align}
Then for any $\epsilon>0$, there exists $t_2>t_1>0$, such that
\begin{align}
    \mathcal{L}_v(t_1;\bm{\vartheta}) \leq 4B(\epsilon+\sqrt{2}\sigma_{\eta}+2\big(\mathcal{L}(t_2;\bm{\vartheta}) +  \frac{m}{2}\int_{t_1}^{t_2}\frac{ \mathcal{L}(t;\bm{\vartheta})}{t} dt\big)^{1/2}).
\end{align}
\end{theorem}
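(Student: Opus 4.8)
The plan is to recast both the model and ground-truth means as solutions of the heat semigroup in the internal-noise level, exactly as in the proof of Theorem~\ref{thm:consistency}, and then read off the output variance as a quadratic functional of that semigroup. Writing \(P_t\) for convolution against \(\mathcal{N}(\mathbf{0},tI)\), we have \(m(\mathbf{x},tI;\bm{\vartheta})=P_tf(\cdot;\bm{\vartheta})(\mathbf{x})\) and \(m^{\ast}(\mathbf{x},tI)=P_tf^{\ast}(\mathbf{x})\), so the mean error \(\psi_t:=m^{\ast}(\cdot,tI)-m(\cdot,tI;\bm{\vartheta})=P_t\delta\) with \(\delta:=f^{\ast}-f\) again solves the heat equation, and \(\mathcal{L}(t;\bm{\vartheta})=\|\psi_t\|_{L^2(p)}^2\). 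A short computation (differentiating \(P_t(g^2)-(P_tg)^2\) in \(t\)) gives the carré-du-champ identity \(v(\cdot,tI,\sigma_\eta^2;\bm{\vartheta})-\sigma_\eta^2=\int_0^tP_{t-s}\big[\,|\nabla m(\cdot,sI;\bm{\vartheta})|^2\,\big]\,ds\), and likewise for \(v^{\ast}\); subtracting and polarizing shows that the variance error is a Gaussian covariance, \(v^{\ast}-v=P_t(\delta\,(f^{\ast}+f))-(P_t\delta)(P_t(f^{\ast}+f))\). Note that the external-noise floor \(\sigma_\eta^2\) cancels exactly in this difference, which is why the final \(\sqrt{2}\,\sigma_\eta\) term must enter only through the estimation machinery and not through \(v^{\ast}-v\) itself.

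First I would reduce the variance error to the fluctuation of the base error \(\delta\). Applying Cauchy–Schwarz to the Gaussian covariance together with the uniform bound \(|f|,|f^{\ast}|\le B\) (so \(|f^{\ast}+f|\le 2B\)) yields the pointwise estimate \(|v^{\ast}-v|(\mathbf{x})\le 2B\,\sqrt{\operatorname{Var}_t(\delta)(\mathbf{x})}\), where \(\operatorname{Var}_t(\delta)(\mathbf{x})=P_t(\delta^2)(\mathbf{x})-(P_t\delta)^2(\mathbf{x})\). Integrating against \(p\) and using Jensen's inequality (concavity of \(\sqrt{\cdot}\)) collapses this to \(\mathcal{L}_v(t_1;\bm{\vartheta})\le 2B\,\big(\int\operatorname{Var}_{t_1}(\delta)\,p\,d\mathbf{x}\big)^{1/2}\). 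The whole problem is now to bound the spatial average of \(\operatorname{Var}_{t_1}(\delta)\) — the fluctuation of the \emph{base} error at the small internal-noise level \(t_1\) — purely in terms of the supervised mean losses \(\mathcal{L}(t;\bm{\vartheta})\) recorded at the \emph{larger} noise levels \(t\in[t_1,t_2]\), since those are the only quantities appearing on the right-hand side.

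The crux, and the step I expect to be hardest, is precisely this last bound, because it runs in the ill-posed direction: inferring the small-scale fluctuation of \(\delta\) from the heat-smoothed error \(\psi_t=P_t\delta\) observed only at coarser scales is a quantitative backward-uniqueness problem for the heat equation. My plan is to combine the Duhamel representation \(\operatorname{Var}_t(\delta)=\int_0^tP_{t-s}[\,|\nabla\psi_s|^2\,]\,ds\) with a log-convexity / frequency-function monotonicity argument for \(t\mapsto\|\psi_t\|_{L^2(p)}^2\), so that the contribution of the uncontrolled range \(s\in[0,t_1]\) is traded, up to an arbitrarily small slack \(\epsilon\) obtained by taking \(t_2\) large, for the boundary value \(\mathcal{L}(t_2;\bm{\vartheta})\) plus the \(1/t\)-weighted integral \(\tfrac{m}{2}\int_{t_1}^{t_2}\mathcal{L}(t;\bm{\vartheta})/t\,dt\). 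The dimensional factor \(m/2\) and the \(1/t\) weight are the signature of the sharp gradient estimate \(\|\nabla P_th\|_{L^2}^2\lesssim\tfrac{m}{t}\|h\|_{L^2}^2\) for Gaussian smoothing fed into the monotonicity formula, while the residual \(\sqrt{2}\,\sigma_\eta\) appears once the frequency estimate is applied to the \emph{total} output fluctuation (which carries the external-noise floor) rather than to the intrinsic fluctuation.

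Finally I would assemble the pieces: substitute the backward-uniqueness bound for \(\int\operatorname{Var}_{t_1}(\delta)\,p\) into \(\mathcal{L}_v(t_1;\bm{\vartheta})\le 2B(\cdots)^{1/2}\), and collect the additive slacks \(\epsilon\), \(\sqrt{2}\,\sigma_\eta\) and the quantitative term \(2(\mathcal{L}(t_2)+\tfrac{m}{2}\int_{t_1}^{t_2}\mathcal{L}(t)/t\,dt)^{1/2}\) inside the common prefactor \(4B\), which yields the stated inequality. The only genuinely delicate point is controlling the constants and the \(\epsilon\)-for-\(t_2\) trade-off in the log-convexity step; every other ingredient is a routine semigroup or convexity manipulation built on the heat representation already used for Theorem~\ref{thm:consistency}.
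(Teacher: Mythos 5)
Your opening reductions are sound and in one respect cleaner than the paper's: the identity $v^{\ast}-v=P_t\big(\delta\,(f^{\ast}+f)\big)-(P_t\delta)\,P_t(f^{\ast}+f)$ with $\delta=f^{\ast}-f$, followed by Cauchy--Schwarz and Jensen to get $\mathcal{L}_v(t_1)\le 2B\big(\int \mathrm{Var}_{t_1}(\delta)\,p\big)^{1/2}$, is correct, and you rightly trace the $\tfrac m2\int \mathcal{L}(t)/t\,dt$ term to a Li--Yau/Harnack-type gradient estimate (the paper indeed derives $\mathcal{L}(t_1)\le\mathcal{L}(t_2)+\tfrac m2\int_{t_1}^{t_2}\mathcal{L}(t)/t\,dt$ from the energy identity $\tfrac{d}{dt}\mathcal{L}=-2\int\|\nabla(m^{\ast}-m)\|^2p$ plus Harnack). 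The gap is the step you yourself flag as the crux, and it is not merely unexecuted but unexecutable as formulated: for a fixed $t_1$, the quantity $\int\mathrm{Var}_{t_1}(\delta)\,p\,d\mathbf{x}$ contains $\int P_{t_1}(\delta^2)\,p$, i.e.\ the \emph{unsmoothed} magnitude of $\delta$, and this cannot be bounded by the supervised losses $\{\mathcal{L}(t)\}_{t\in[t_1,t_2]}$, which only see $P_t\delta$. Take $f=0$ and $f^{\ast}(x)=B\sin(kx)$: then $\mathcal{L}(t)\sim B^2e^{-2k^2t}$ is arbitrarily small for every $t\ge t_1$ once $k$ is large, while $\int\mathrm{Var}_{t_1}(\delta)\,p\approx B^2/2$ stays bounded away from zero. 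No log-convexity or frequency-function monotonicity can cross this barrier; quantitative backward uniqueness genuinely fails in this direction without extra regularity on $\delta$.

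The theorem survives only because $t_1$ is an existential quantifier chosen \emph{small}, so that $P_{t_1}$ is close to the identity on the bounded functions involved and the fine-scale discrepancy is absorbed into $\epsilon$ by continuity of the Gaussian convolution; your plan instead takes the slack from $t_2$ \emph{large}, which does nothing to the uncontrolled $s\in[0,t_1]$ range in your Duhamel formula. Relatedly, your decomposition has no source for the $\sqrt2\sigma_\eta$ term --- you correctly note that $\sigma_\eta^2$ cancels in $v^{\ast}-v$ --- which signals the divergence from the intended route: the paper bounds $|v^{\ast}-v|\le 4B\big(\mathbb{E}|y^{\ast}-y|+|m^{\ast}-m|\big)$ via the elementary inequality $|(a-b)^2-(a'-b')^2|\le4\max\{|a|,|a'|,|b|,|b'|\}(|a-a'|+|b-b'|)$, and the two \emph{independent} external noise copies inside $\mathbb{E}|y^{\ast}-y|$ are exactly what produce $\sqrt2\sigma_\eta$; the remaining pieces are peeled off by H\"older--Minkowski into $\mathcal{L}(t_1)^{1/2}$ plus Gaussian-smoothing remainders bounded by $\epsilon$ for small $t_1$. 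To repair your argument you would need to replace the backward-uniqueness step by this small-$t_1$ continuity argument and use a bound on $|v^{\ast}-v|$ that tolerates the independent noise copies.
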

\begin{proof}
First, we apply
the parabolic differential equation (Eq.~\eqref{eq:heat1} and Eq.~\eqref{eq:heat2}) and the integration by parts, we have
\begin{align}
    \frac{d}{dt}\mathcal{L}(t;\bm{\vartheta}) = -2\int_{\mathbb{R}^n}\left\|\bigtriangledown_{\mathbf{x}} \big(m^{\ast}(\mathbf{x},C_{\xi};\bm{\vartheta})-y^{\ast}(\mathbf{x},C_{\xi};\bm{\vartheta})\big)\right\|^2p(\mathbf{x})d\mathbf{x}.
\end{align}
Hence we have for $0<t_1<t_2$
\begin{align}\label{eq:tau12}
   \mathcal{L}(t_1;\bm{\vartheta}) = \mathcal{L}(t_2;\bm{\vartheta})+ 2\int_{t_1}^{t_2} \int_{\mathbb{R}^n} \left\|\bigtriangledown_{\mathbf{x}} \big(m^{\ast}(\mathbf{x},C_{\xi};\bm{\vartheta})-y^{\ast}(\mathbf{x},C_{\xi};\bm{\vartheta})\big)\right\|^2 p(\mathbf{x})d\mathbf{x}d t.
\end{align}
Use the Harnack's inequality~\cite{evans2022partial}, we have
\begin{align}
    &\left \| \bigtriangledown_{\mathbf{x}} \big(m^{\ast}(\mathbf{x},C_{\xi};\bm{\vartheta})-y^{\ast}(\mathbf{x},C_{\xi};\bm{\vartheta})\big)\right\|^2 \leq r(\tau,\mathbf{x};\bm{\theta}) \partial_{\tau} \big(m^{\ast}(\mathbf{x},C_{\xi};\bm{\vartheta})-y^{\ast}(\mathbf{x},C_{\xi};\bm{\vartheta})\big)\\ 
    &+ \frac{m}{2\tau} \big(m^{\ast}(\mathbf{x},C_{\xi};\bm{\vartheta})-y^{\ast}(\mathbf{x},C_{\xi};\bm{\vartheta})\big)^2,
\end{align}
combine with Eq.~\eqref{eq:tau12}, we have
\begin{align}\label{eq:enegry}
  \mathcal{L}(t_1;\bm{\vartheta}) \leq \mathcal{L}(t_2;\bm{\vartheta}) +  \frac{m}{2}\int_{t_1}^{t_2}\frac{ \mathcal{L}(t;\bm{\vartheta})}{t} dt.
\end{align} 
Using the Hölder inequality and Minkowski inequality on the measure $p(\mathbf{x})$, we have
\begin{align}
\begin{aligned}
      \int_{\mathbf{x}}|y^{\ast}(\mathbf{x},\bm{\vartheta},O,\sigma_{\eta}^2)-y(\mathbf{x},\bm{\vartheta},O,\sigma_{\eta}^2)|p(\mathbf{x})d\mathbf{x} & \leq \big(\int_{\mathbf{x}}|y^{\ast}(\mathbf{x},\bm{\vartheta},O,\sigma_{\eta}^2)-y(\mathbf{x},\bm{\vartheta},O,\sigma_{\eta}^2)|^2p(\mathbf{x})d\mathbf{x}\big)^{1/2}\\
      &\leq 
        \big(\int_{\mathbb{R}^n}  (m(\mathbf{x},\bm{\vartheta},O)-m(\mathbf{x},\bm{\vartheta},t_1I))^2 p(\mathbf{x})d\mathbf{x}\big)^{1/2}\\
    &+
        \big(\int_{\mathbb{R}^n} (m(\mathbf{x},\bm{\vartheta},t_1I)-m^{\ast}(\mathbf{x},t_1I))^2 p(\mathbf{x})d\mathbf{x}\big)^{1/2}
    \\
     &+\big(\int_{\mathbb{R}^n}  (m^{\ast}(\mathbf{x},t_1I)-m^{\ast}(\mathbf{x},\mathbf{x},O))^2 d\mathbf{x}\big)^{1/2}+\sqrt{2}\sigma_{\eta}\\
     &=\big(\int_{\mathbb{R}^n}  (m(\mathbf{x},\bm{\vartheta},O)-m(\mathbf{x},\bm{\vartheta},t_1I))^2 p(\mathbf{x})d\mathbf{x}\big)^{1/2}+\\
     &+\big(\int_{\mathbb{R}^n}  (m^{\ast}(\mathbf{x},t_1I,\sigma_{\eta}^2)-m^{\ast}(\mathbf{x},O))^2 d\mathbf{x}\big)^{1/2}+    
       \mathcal{L}(t_1;\bm{\vartheta})^{1/2}+\sqrt{2}\sigma_{\eta}
\end{aligned},
\end{align}
where $O\in\mathbb{R}^{m\times m}$ is the all-zero matrix.
Noticed that for $a,b,{a}',{b}'\in\mathbb{R}$, we have
\begin{align}
    |(a-b)^2 - ({a}'-{b}')^2| \leq |a+{a}'-2b||a-{a}'|+|2{a}'-b-{b}'||b-{b}'| \leq 4\max\{|a|,|{a}'|,|b|,|{b}'|\}(|a-{a}'|+|b-{b}'|).
\end{align}
we have
\begin{align}
    &\int_{\mathbf{x}}|v^{\ast}(\mathbf{x},t_1I,\sigma^2_{\eta}) - v(\mathbf{x},t_1I,\sigma^2_{\eta};\bm{\vartheta})|p(\mathbf{x})d\mathbf{x} \\
    &\leq 4B\left(\int_{\mathbf{x}}\mathbb{E}[|y^{\ast}(\mathbf{x},t_1I,\sigma^2_{\eta})-y(\mathbf{x},t_1I,\sigma^2_{\eta})|]p(\mathbf{x})d\mathbf{x}+
    \int_{\mathbf{x}}|m^{\ast}(\mathbf{x},t_1I,\sigma^2_{\eta})-m(\mathbf{x},t_1I,\sigma^2_{\eta})|p(\mathbf{x})d\mathbf{x}\right)\\
    &\leq 4B\left(\left(\int_{\mathbf{x}}\mathbb{E}[\big(y^{\ast}(\mathbf{x},t_1I,\sigma^2_{\eta})-y(\mathbf{x},t_1I,\sigma^2_{\eta})\big)^2]p(\mathbf{x})d\mathbf{x}\right)^{1/2}+
    \mathcal{L}(t_1;\bm{\vartheta})^{1/2}\right)\\
   &= 4B\left(\left(\int_{\mathbf{x}}\big(y^{\ast}(\mathbf{x},O,\sigma^2_{\eta})-y(\mathbf{x},O,\sigma^2_{\eta})\big)^2\tilde{p}(\mathbf{x})d\mathbf{x}\right)^{1/2}+
    \mathcal{L}(t_1;\bm{\vartheta})^{1/2}\right),
\end{align}
where $\tilde{p}(\mathbf{x})$ is the distribution of $\mathbf{x}+\bm{\xi}$ with $\mathbf{x}\sim p(\mathbf{x})$ and $\bm{\xi}\sim\mathcal{N}(\mathbf{0},t_1I)$.
Due the continuity of the Gaussian convolution kernel and the assumption that $y$ and $y^\ast$ are bounded, for a given $\epsilon>0$, there exists $t_1>0$ small enough, such that 
\begin{align}
    &\big(\int_{\mathbb{R}^n}  (m(\mathbf{x},\bm{\vartheta},O)-m(\mathbf{x},\bm{\vartheta},t_1I))^2 p(\mathbf{x})d\mathbf{x}\big)^{1/2}
    +\big(\int_{\mathbb{R}^n}  (m^{\ast}(\mathbf{x},t_1I,\sigma_{\eta}^2)-m^{\ast}(\mathbf{x},O))^2 d\mathbf{x}\big)^{1/2}\\
    &+\left|\left(\int_{\mathbf{x}}\big(y^{\ast}(\mathbf{x},O,\sigma^2_{\eta})-y(\mathbf{x},O,\sigma^2_{\eta})\big)^2\tilde{p}(\mathbf{x})d\mathbf{x}\right)^{1/2} - \left(\int_{\mathbf{x}}\big(y^{\ast}(\mathbf{x},O,\sigma^2_{\eta})-y(\mathbf{x},O,\sigma^2_{\eta})\big)^2p(\mathbf{x})d\mathbf{x}\right)^{1/2}\right| \leq \epsilon.
\end{align}
Then we have
\begin{align}
    \int_{\mathbf{x}}|v^{\ast}(\mathbf{x},t_1I,\sigma^2_{\eta}) - v(\mathbf{x},t_1I,\sigma^2_{\eta};\bm{\vartheta})|p(\mathbf{x})d\mathbf{x}\leq 4B(\epsilon+\sqrt{2}\sigma_{\eta}+2\mathcal{L}(t_1;\bm{\vartheta})^{1/2}).
\end{align}
Combine with Eq.~\eqref{eq:enegry}, we have
\begin{align}
      \int_{\mathbf{x}}|v^{\ast}(\mathbf{x},t_1I,\sigma^2_{\eta}) - v(\mathbf{x},t_1I,\sigma^2_{\eta};\bm{\vartheta})|p(\mathbf{x})d\mathbf{x}\leq 4B(\epsilon+\sqrt{2}\sigma_{\eta}+2\big(\mathcal{L}(t_2;\bm{\vartheta}) +  \frac{m}{2}\int_{t_1}^{t_2}\frac{ \mathcal{L}(t;\bm{\vartheta})}{t} dt\big)^{1/2}),
\end{align}
which proves the theorem.
\end{proof}
This theorem states that the error in the output variance can be controlled by the mean loss under higher noise conditions (\(t_2 > t_1\)). This suggests that the neural system should be trained under a higher noise level to effectively quantify uncertainty within the noise range present during inference.

Both Thm.~\ref{thm:consistency} and Thm.~\ref{thm:error_bound} can be extended to cases where the output \(y\) is high-dimensional.

\section{Supplementary figures}

\begin{figure}[h]
    \centering
    \includegraphics[width=0.95\linewidth]{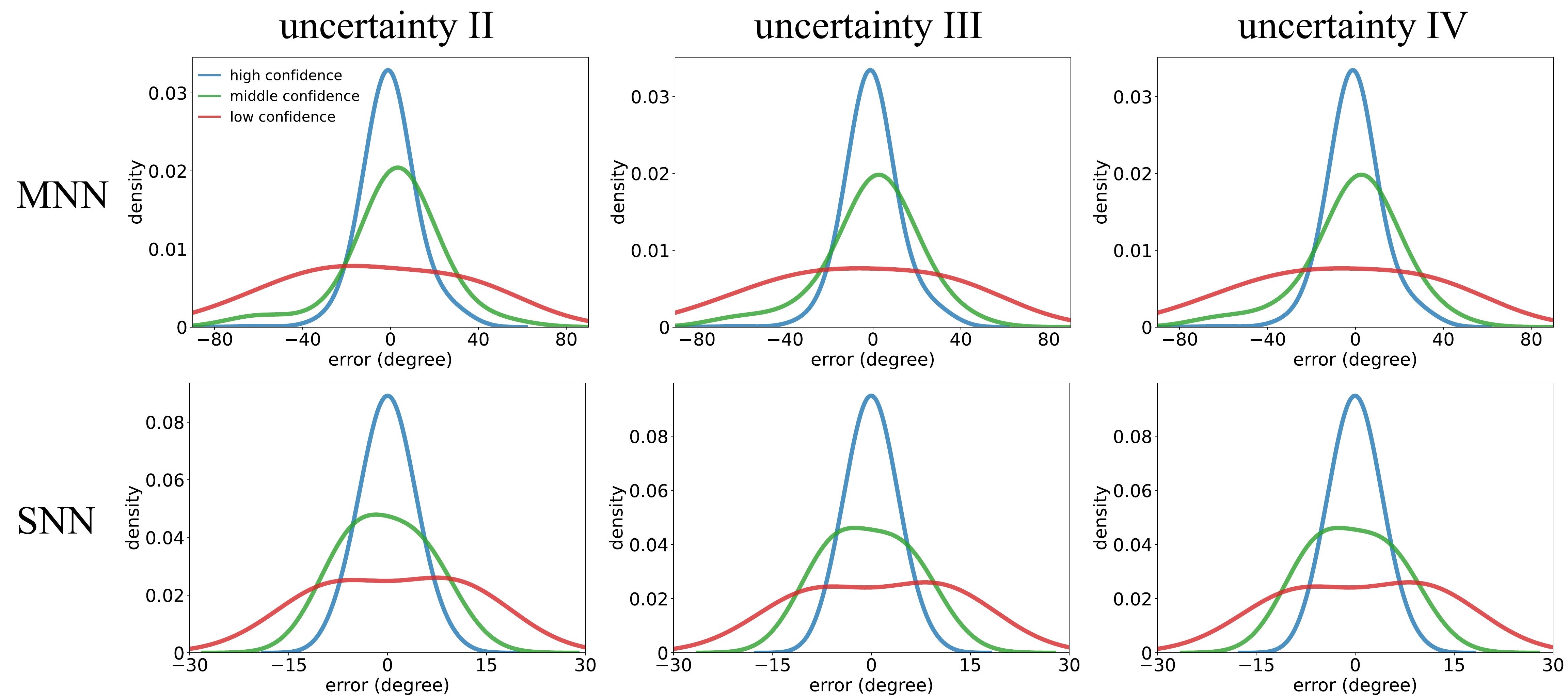}
    \caption{The error distribution across three groups of instances, divided based on the level of uncertainty (using the uncertainty metrics II, III, and IV (Eq.~\eqref{eq:uncern_1}-\eqref{eq:uncern_2}, \mm)): top 25\% uncertainty (low confidence), top 25-50\% uncertainty (middle confidence), and the remaining instances (high confidence). We plot the result for both moment neural network (MNN) and spiking neural network (SNN). The corresponding results calculated by the uncertainty metrics I are shown in Fig.~\ref{fig:uq_result}d and Fig.~\ref{fig:snn}c, respectively.
    }
    \label{fig:SI_error_dist}
\end{figure}

\begin{figure}[h]
    \centering
    \includegraphics[width=0.95\linewidth]{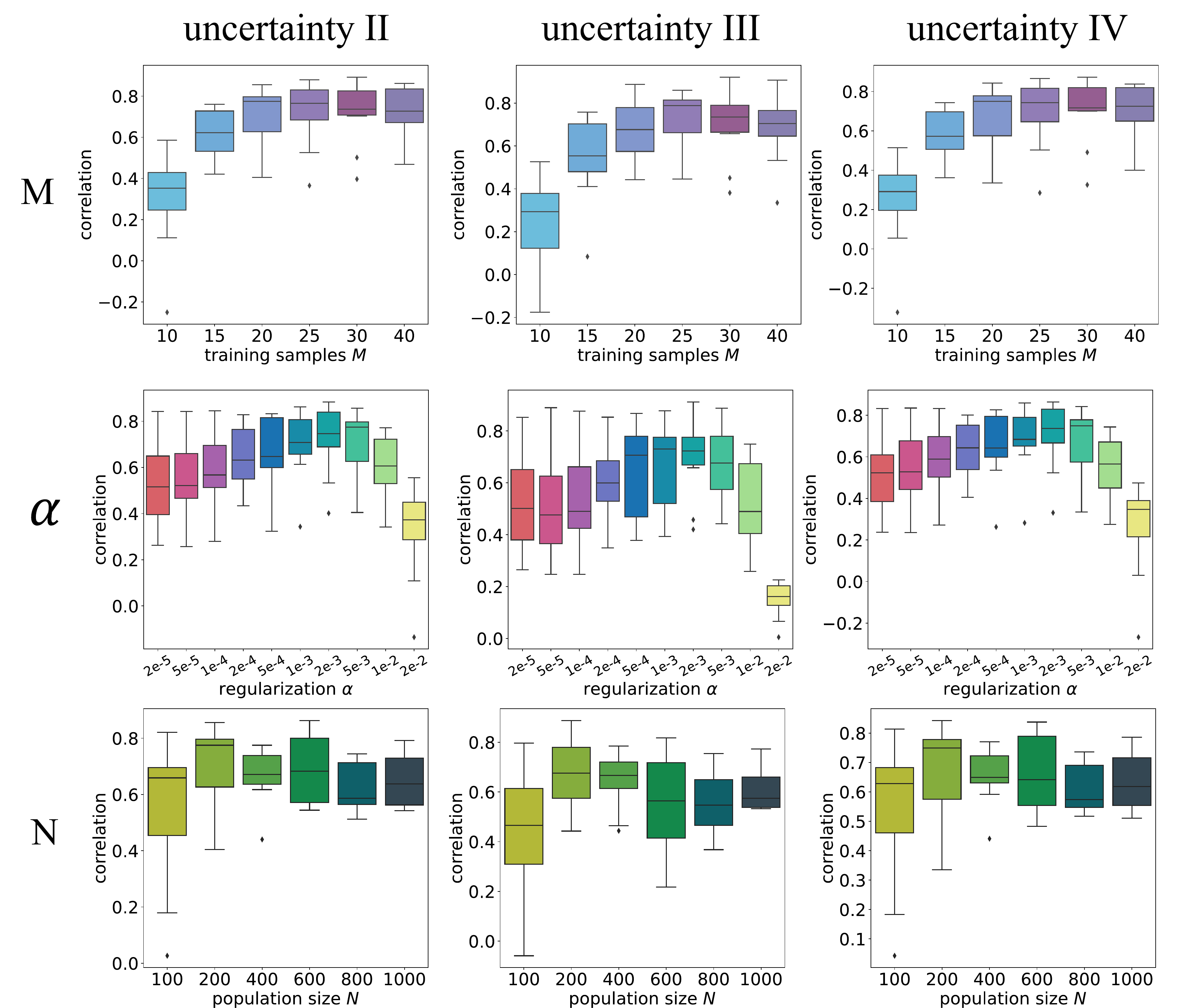}
    \caption{Effect of training samples \(M\), population size \(N\), and regularization \(\alpha\) on the uncertainty quantification performance (evaluated by the correlation between the error and the uncertainty quantification calculated by the uncertainty metrics (Eq.~\eqref{eq:uncern_1}-\eqref{eq:uncern_2}, \mm). The corresponding results of uncertainty metrics I is shown in Fig.~\ref{fig:parameter}a,d,g.
    }
    \label{fig:SI_parameter_corr}
\end{figure}

\begin{figure}[h]
    \centering
    \includegraphics[width=0.95\linewidth]{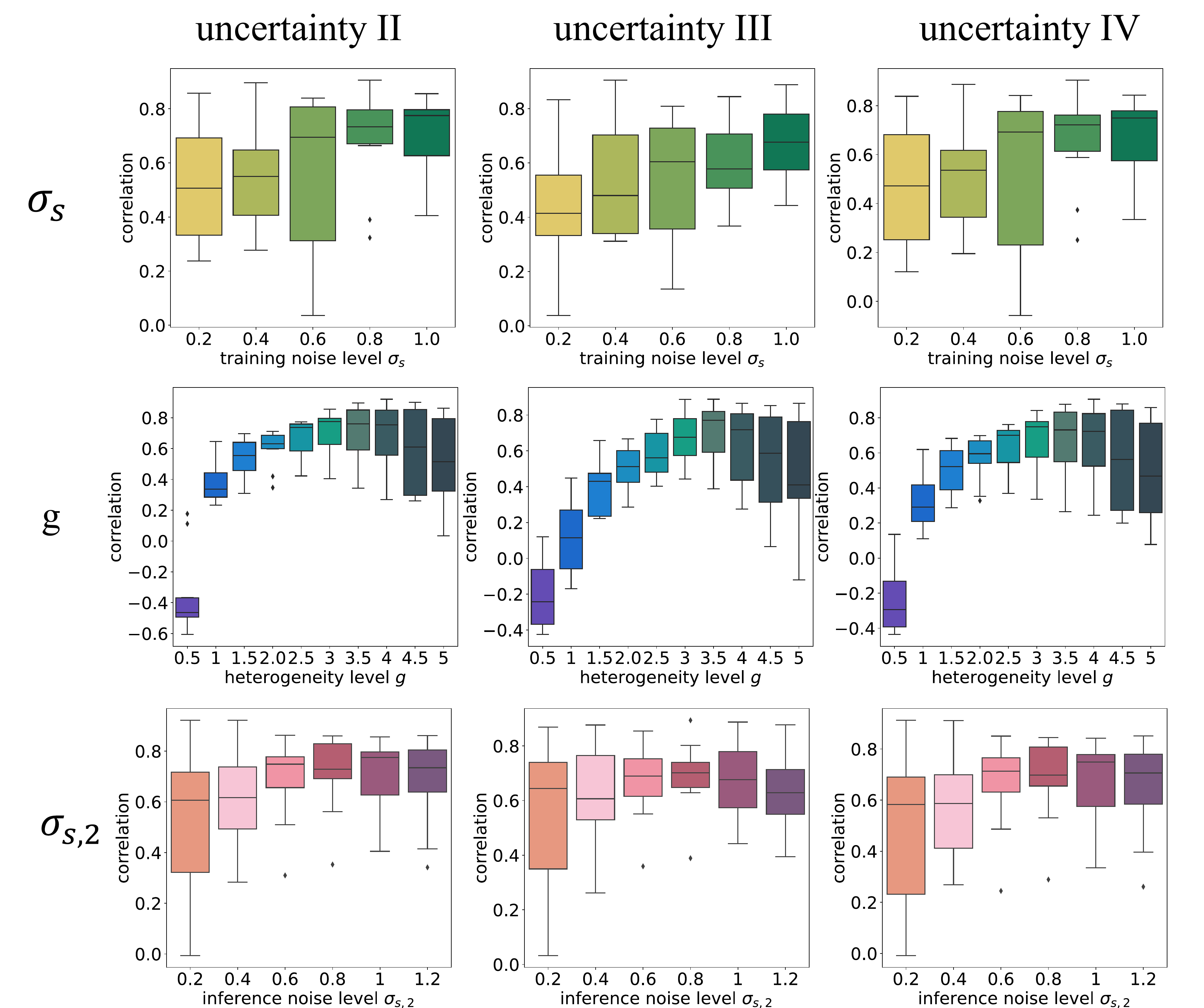}
    \caption{Effect of level of noise at training phase \(\sigma_s\), heterogeneity \(g\) and level of noise at inference phase \(\sigma_{s,2}\)  on the uncertainty quantification performance (evaluated by the correlation between the error and the uncertainty quantification calculated by the uncertainty metrics (Eq.~\eqref{eq:uncern_1}-\eqref{eq:uncern_2}, \mm)) The corresponding results of uncertainty metrics I is shown in Fig.~\ref{fig:parameter2}a,d,g.
    }
    \label{fig:SI_parameter2_corr}
\end{figure}

\begin{figure}[h]
    \centering
    \includegraphics[width=0.95\linewidth]{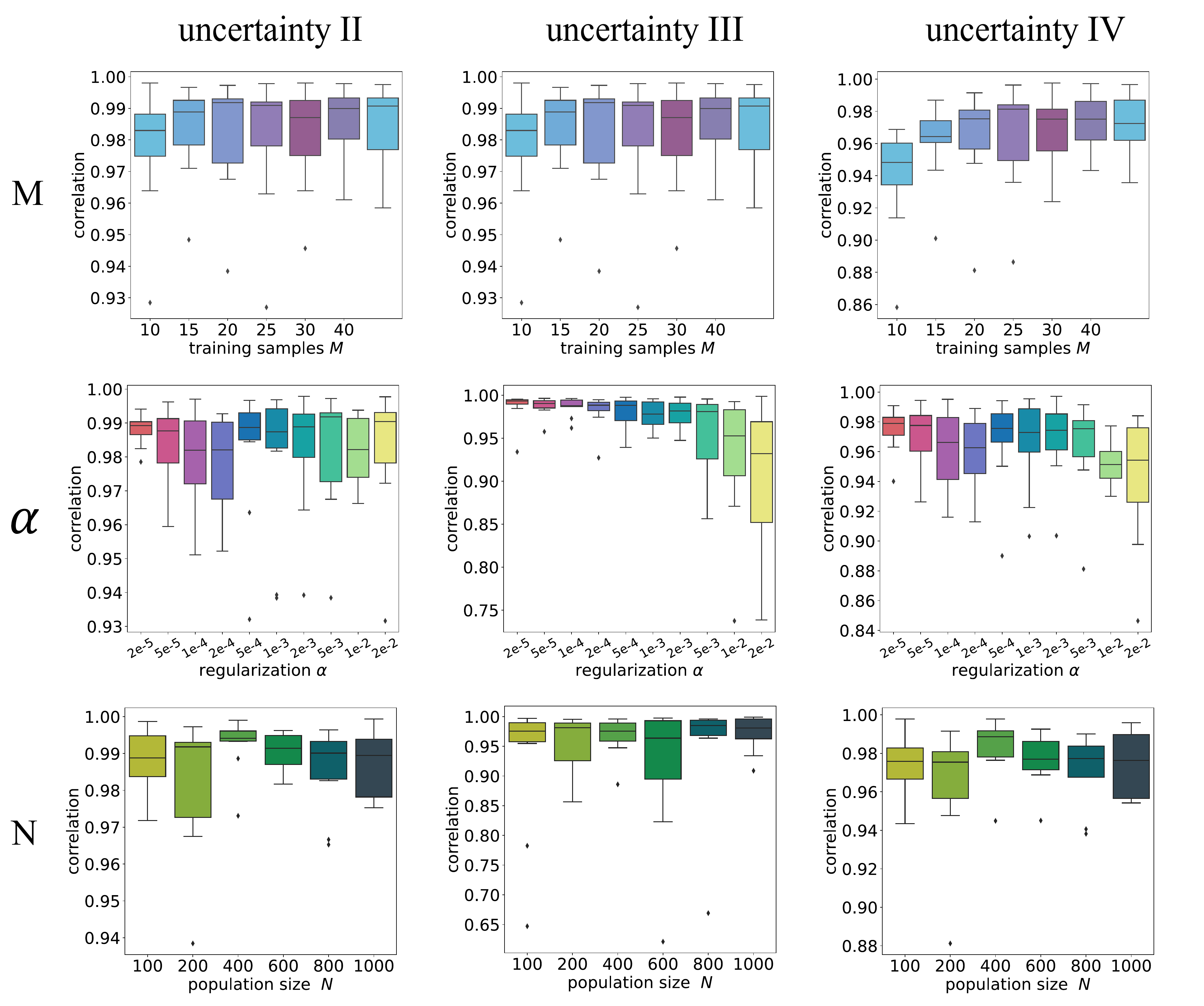}
    \caption{Effect of training samples \(M\), population size \(N\), and regularization \(\alpha\) on the mean-covariance coupling strength (evaluated by the correlation between the bump width and the uncertainty quantification calculated by the uncertainty metrics (Eq.~\eqref{eq:uncern_1}-\eqref{eq:uncern_2}, \mm). The corresponding results of uncertainty metrics I is shown in Fig.~\ref{fig:parameter}b,e,h.
    }
    \label{fig:SI_parameter_width}
\end{figure}

\begin{figure}[h]
    \centering
    \includegraphics[width=0.95\linewidth]{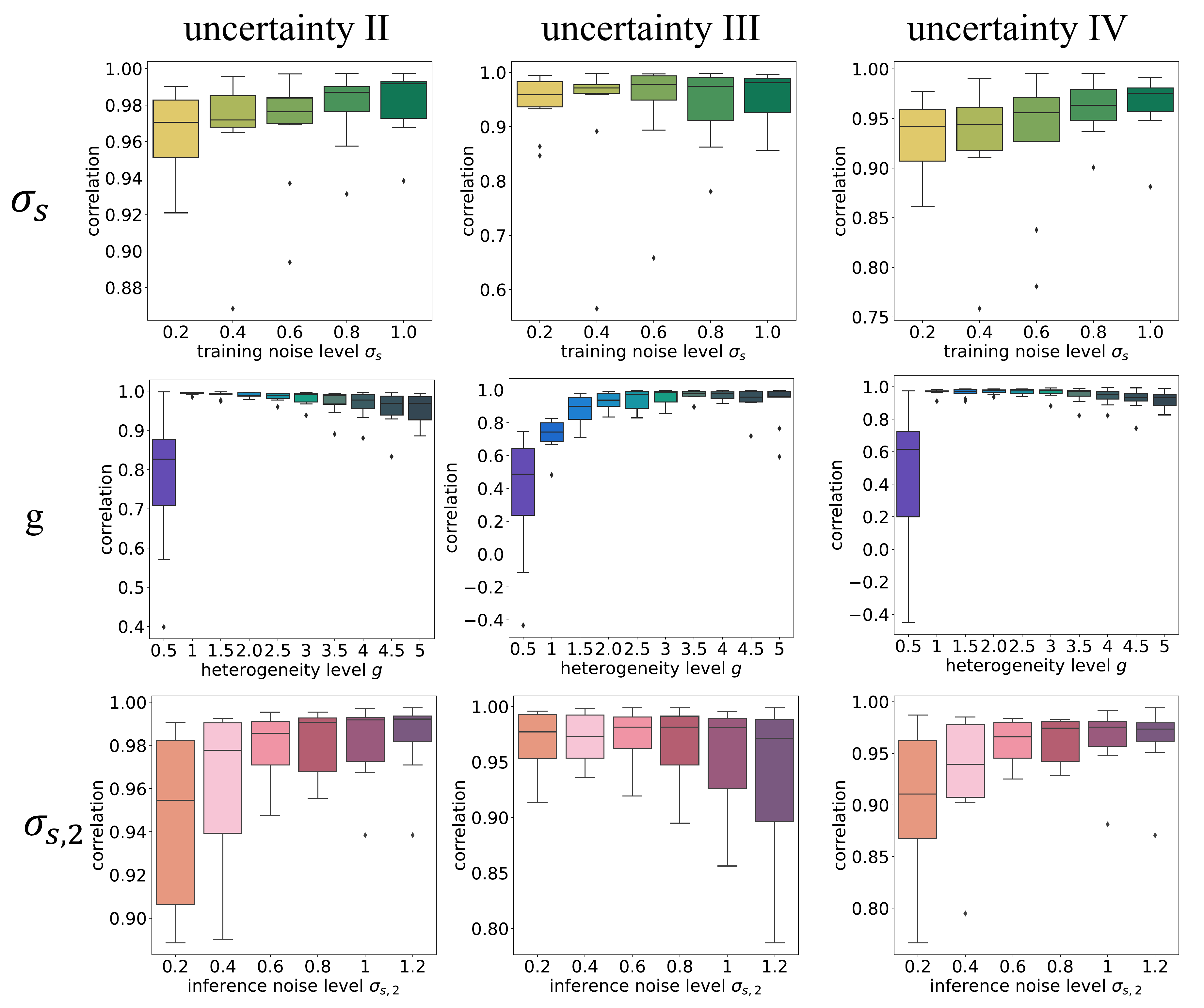}
    \caption{Effect of level of noise at training phase \(\sigma_s\), heterogeneity \(g\) and level of noise at inference phase \(\sigma_{s,2}\)  on the mean-covariance coupling strength (evaluated by the correlation between the bump width and the uncertainty quantification calculated by the uncertainty metrics (Eq.~\eqref{eq:uncern_1}-\eqref{eq:uncern_2}, \mm). The corresponding results of uncertainty metrics I is shown in Fig.~\ref{fig:parameter2}b,e,h.
    }
    \label{fig:SI_parameter2_width}
\end{figure}

\end{document}